\renewcommand*\env@matrix[1][*\c@MaxMatrixCols c]{%
	\hskip -\arraycolsep
	\let\@ifnextchar\new@ifnextchar
	\array{#1}}
\newtheorem{definition}{Definition}[section]
\newtheorem{thm}{Theorem}[section]
\newtheorem{cor}[thm]{Corollary}
\newtheorem{rem}[thm]{Remark}
\newcommand{\bmat}{\left[ \begin{array}}
\newcommand{\emat}{\end{array} \right]}
\newcommand{\DEG}{\mathrm{deg}\;}
\DeclareMathOperator{\e}{e}
\title{\LARGE \bf Linear system security--- detection and correction of adversarial attacks in the noise-free case}
\author{Zhanghan Tang, Margreta Kuijper, Michelle Chong, Iven Mareels and Chris Leckie\footnote{M.S. Chong is with Lund University, Sweden; all other authors are with The University of Melbourne, Australia} \\ \today}
\begin{document}

\maketitle
 \begin{abstract}
 We address the problem of attack detection and attack correction for multi-output discrete-time linear time-invariant systems under sensor attack. More specifically, we focus on the situation where adversarial attack signals are added to some of the system's output signals. A `security index' is defined to characterize the vulnerability of a system against such sensor attacks. Methods to compute the security index are presented as are algorithms to detect and correct for sensor attacks. The results are illustrated by examples involving multiple sensors.
 \end{abstract}
\section{Introduction}\label{sec_intro}
\quad In today's society, the physical infrastructure in support of critical services such as water and energy can be described as a cyber-physical system. The end-to-end service can therefore be affected not only by the loss of functionality of the physical assets (breaking of wires or pipes, loss of motors or sensors, due to, for example, accidental breakage or maintenance outages or due to a natural disaster) but also through loss of functionality in the cyber assets (like loss of bandwidth in communication channels, message loss, or a virus in a computer operating system).

In this paper, the particular cyber attack scenario where sensor signals may be corrupted by additive
signals is considered. In the case where such signals are injected based on knowledge of a model of the system's
behaviour, it no longer suffices to treat these external signals as mere disturbances or noise, as in fact they can be used to control the behaviour. For instance, it is conceivable that a modern autonomous vehicle may be hijacked using ’sensor spoofing’ \cite{warner16}.

This issue has captured the attention of the control community. Several attack detection methods have been proposed in the literature, such as \cite{pasqualettiDB13,chenKarMouraICASSP15,kmanandhar14}. Correction methods have also been discussed in the literature, such as \cite{fawziTD14,shoukryT2016,mpajic15,ChongWakaikiHespanhaACC15}.

In this paper, the case of discrete-time, linear and
time-invariant (LTI) systems where output signals may
be compromised is considered. Attack signals are
modeled as signals added to the output signals. To
assist in the analysis, the notion of the `security index' of a system is introduced. This is analogous to the notion of the `minimum distance' used in coding theory. The main tools used in this paper are kernel representations of systems and the setting is the behavioral approach \cite{poldermanW97}. The security index is a quantitative representation-free measure of
the vulnerability of a system to sensor attacks. It speaks to the detectability and correctability of attack signals.

Unlike much of the work in this area, our starting point in this paper is a kernel representation  (see for example ~\cite[Ch2.5]{poldermanW97}) rather than a state space representation. Reasons for this are: 1. Many well-established theorems based on kernel representations can be applied when we are discussing a system using a behavioral approach. 2. Every kernel representation can be brought into state space form (see subsection \ref{sec:attack_detect}.4) and every observable system in state space form can be transformed into a kernel representation. 3. The implementation of systems in kernel representation can be done straightforwardly using shift operators.

Previous works involving systems under sensor attacks include \cite{ChongWakaikiHespanhaACC15,fawziTD14,chenKarMouraICASSP15,shoukryT2016,Frisk99}. The work of~\cite{fawziTD14} focuses on reconstructing the state value using a relaxed optimization program to approximate an NP-hard $l_0$-norm optimization problem. We demonstrate that our formulation simplifies the approach. There is a consensus of our result with \cite{ChongWakaikiHespanhaACC15} that the output signals are only guaranteed to be reconstructible if a certain upper bound on the number of attacked sensors is met.
We reformulate the assumptions of \cite{ChongWakaikiHespanhaACC15} in terms of the 
security index and derive methods for detection as well as correction. Other related works are \cite{Teix15} which focuses on the establishment of the models for various attack signals; and  \cite{sandbergTJ2010} which focuses on the security of power networks.

The focus of our paper is on the development of a conceptual approach to attack detection and correction. Unlike e.g. \cite{ChongWakaikiHespanhaACC15,leeSE2015ECC,Zhang2015a}, we restrict ourselves to a noise-free environment to enable the reader to understand the essence of the proposed methods. These methods then serve as a starting point for further research on the noisy case.

The outline of this paper is as follows. Section \ref{sec:notation} presents some notation used in the paper. Section \ref{sec:prob_statement} describes the system and the problem statements, while Section \ref{sec:sec_index} defines the security index, attack detectability and attack correctability. A method of computing the security index in terms of a kernel representation is given in Section \ref{sec:computation}. Section \ref{sec:sim_tra} presents the Kronecker-Hermite canonical form representation. Section \ref{sec:attack_detect} gives an attack detection method and uses the Kronecker-Hermite canonical form to design attack correction methods, first for the maximally secure case, then for the general case. The theory is illustrated in Section \ref{sec:example} by results and simulations for two discrete-time LTI system examples that involve multiple sensors. Finally, conclusions and future work directions are presented in Section \ref{sec:con}. This paper builds on preliminary work by two of its authors~\cite{chongK2016mtns,Chong2016cdc}, and the main new result is the attack correction method.
\section{Notation}\label{sec:notation} 
\begin{itemize}
	\item Let $\mathbb{Z}_{+}=\{0,1,\dots\}$ and $\mathbb{R}:=(-\infty,\infty)$. 
	\item  The $N$-dimensional signal $y = (y_1,...,y_N)^T$ is denoted as $y:\mathbb{Z}_+\rightarrow\mathbb{R}^N$. 
	\item An $N\times N$ identity matrix is denoted by $\mathbb{I}_N$.
	\item The support of a signal is denoted by $\textrm{supp }(y):=\{i\in\{1,\dots,N\}: y_i \mbox{ is not the zero signal} \}$. 
	\item The weight of a signal $y$ is denoted by $\|y\|:=|\textrm{supp}(y)|$, i.e., the number of components of $y$ that are non-zero signals.
	\item If $\mathcal{J}$ is a subset of $\{1,\dots,N\}$ then its complementary set is denoted by $\bar{\mathcal{J}}$.
	\item The shift operator $\sigma$ is defined as $\sigma y(t) :=y(t+1)$.
	\item The degree of a polynomial $a(\xi)$ is denoted by $\DEG a(\xi)$.
	\item The greatest common divisor of two polynomials $a(\xi)$ and $b(\xi)$ is denoted by $\text{GCD}(a(\xi),b(\xi))$.
\end{itemize}

\section{Problem statement}\label{sec:prob_statement}
Consider a linear time-invariant (LTI) system $\Sigma$ in its kernel representation as follows
\begin{align}  \label{eq:system}
\Sigma: \quad R(\sigma)y = 0,
\end{align}
where $y:\mathbb{Z}_+\rightarrow\mathbb{R}^N$ is the sensor output signal of the system $\Sigma$ and $R(\xi)$ is a real polynomial matrix of full rank, meaning that the system's behaviour is autonomous with no free variables. The size of $R(\xi)$ is $N\times N$.

\begin{definition}[Behaviour of the system $\Sigma$]\label{def:be}
	The behaviour of the system $\Sigma$ is defined as the set given by
	\begin{equation}\label{eq:be_def}
	\mathcal{B} = \{y:\mathbb{Z}_+\rightarrow\mathbb{R}^N\;|\; R(\sigma)y = 0\}.
	\end{equation}
\end{definition}

Consider a class $\mathcal{A}$ of attack signals $\eta:\mathbb{Z}_+\rightarrow\mathbb{R}^N$. A corrupted output signal is $r = y+\eta$. Here we denote the resulting system by $\Sigma_\mathcal{A}$, more specifically we have the following definition.

\begin{definition}[Behaviour of the system $\Sigma_\mathcal{A}$]\label{def:be_attack}
	The behaviour of the corrupted system $\Sigma_\mathcal{A}$ is defined as the set of possible received signals
	\begin{equation}\label{eq:system_attacked}
	\mathcal{B}_\mathcal{A} = \{r:\mathbb{Z}_+\rightarrow\mathbb{R}^N\; |\; r = y+\eta,\mbox{ where }y\in\mathcal{B},\eta\in\mathcal{A}\}.
	\end{equation}
\end{definition}

\begin{definition}[Attack detectability] \label{def:detect} A non-zero attack signal $\eta\in\mathcal{A}$ is detectable if $\eta\notin\mathcal{B}$.
\end{definition}

\begin{definition}[Attack correctability] \label{def:correct} A non-zero attack signal $\eta\in\mathcal{A}$ is correctable if for all $\eta'\neq\eta$, the following is satisfied
	\begin{equation}
	\eta'\in\mathcal{A}\Rightarrow\eta-\eta'\notin\mathcal{B}.
	\end{equation}
\end{definition}
Our objectives in this paper are to first determine the feasibility of attack detection/correction and then to give an attack detection as well as correction method. We show that these methods are guaranteed to produce the correct outcome under certain assumptions about the attack set $\mathcal{A}$.


\section{Attack detection/correction feasibility}\label{sec:sec_index}
In this section we first address the vulnerability of a system given by \eqref{eq:system} against attacks on its sensor outputs $y$. We then introduce a concept that is central to this paper called the \emph{security index} $\delta(\Sigma)$ of the system $\Sigma$, and then we state conditions to achieve attack detectability and correctability. These conditions are stated in terms of $\delta(\Sigma)$. The definitions and results of this section can also be found in~\cite{chongK2016mtns,Chong2016cdc}.

\begin{definition}\label{def:security_index}
	The security index of the system $\Sigma$ is defined as
	\begin{equation}\label{eq:security_index}
	\delta(\Sigma) := \underset{0\neq y\in\mathcal{B}}{\min}{\|y\|}.
	\end{equation}	
\end{definition}

\begin{thm} \label{Theorem:detect}\textbf{\textup{(Attack detection capability of the system)}} Let $\mathcal{A}=\{\eta:\mathbb{Z}_+\rightarrow\mathbb{R}^N\; |\; \|\eta\|<\delta(\Sigma) \text{and } \eta\neq0\}$. All attack signals $\eta\in\mathcal{A}$ are detectable.
\end{thm}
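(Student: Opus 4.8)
The plan is to derive the statement directly from Definition~\ref{def:detect} and Definition~\ref{def:security_index}; no machinery beyond the definitions is needed. First I would fix an arbitrary signal $\eta\in\mathcal{A}$, so that by the very description of $\mathcal{A}$ we have $\eta\neq 0$ and $\|\eta\|<\delta(\Sigma)$. By Definition~\ref{def:detect}, showing that $\eta$ is detectable is the same as showing $\eta\notin\mathcal{B}$, so that is the single goal.

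I would then argue by contradiction. Suppose $\eta\in\mathcal{B}$. Since $\eta\neq 0$, it is a \emph{nonzero} element of the behaviour $\mathcal{B}$, hence an admissible point in the minimization on the right-hand side of \eqref{eq:security_index}. Consequently $\delta(\Sigma)=\min_{0\neq y\in\mathcal{B}}\|y\|\leq\|\eta\|$, which contradicts $\|\eta\|<\delta(\Sigma)$. Therefore $\eta\notin\mathcal{B}$, i.e.\ $\eta$ is detectable; as $\eta\in\mathcal{A}$ was arbitrary, all attack signals in $\mathcal{A}$ are detectable.

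Since the proof collapses to one application of each definition, there is no genuine technical obstacle. The only point worth a short remark is the well-definedness of $\delta(\Sigma)$: if $\mathcal{B}=\{0\}$ the minimum in \eqref{eq:security_index} is over the empty set, and with the convention $\delta(\Sigma)=+\infty$ the set $\mathcal{A}$ consists of every nonzero signal, each of which trivially lies outside $\mathcal{B}=\{0\}$ and is hence detectable; otherwise $\mathcal{B}$ contains a nonzero signal and, because $\|\cdot\|$ takes values in the finite set $\{0,1,\dots,N\}$, the minimum is attained, so the inequality $\delta(\Sigma)\leq\|\eta\|$ invoked above is legitimate.
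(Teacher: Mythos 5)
Your proof is correct and follows essentially the same route as the paper's: both simply observe that $\|\eta\|<\delta(\Sigma)$ together with Definition~\ref{def:security_index} forces $\eta\notin\mathcal{B}$, and then invoke Definition~\ref{def:detect}. Your added remark on the well-definedness of the minimum when $\mathcal{B}=\{0\}$ is a harmless (and reasonable) elaboration that the paper leaves implicit.
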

\begin{proof}
	For any attack signal $\eta$ from $\mathcal{A}$ we must have $\eta\notin\mathcal{B}$ because of Definition \ref{def:security_index}. According to Definition \ref{def:detect}, $\eta$ is then detectable and this completes the proof. 
\end{proof}
Because of the above theorem, the security index $\delta(\Sigma)$ of a system $\Sigma$ can be viewed as the minimum number of sensors that have to be attacked in order to implement an undetectable attack. For example, if a system $\Sigma$ with $N=7$ sensors has security index $\delta(\Sigma) = 5$, then at least 5 of the 7 sensors have to be attacked to achieve an undetectable attack. In accordance with~\cite{chongK2016mtns,Chong2016cdc} we call a system $\Sigma$ with $N$ outputs \emph{maximally secure} if its security index equals $\delta(\Sigma)=N$.

\begin{thm}\label{Theorem:Correctability}\textbf{\textup{(Attack correction capability of the system)}} Let $\mathcal{A} = \{\eta:\mathbb{Z}_+\rightarrow\mathbb{R}^N\;|\; \|\eta\|<\delta(\Sigma)/2 \text{and } \eta\neq 0\}$. All attack signals $\eta\in\mathcal{A}$ are correctable.
\end{thm}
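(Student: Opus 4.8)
The plan is to mirror the classical coding-theory argument that minimum distance $d$ allows correction of $\lfloor (d-1)/2 \rfloor$ errors, here recast in the behavioural language of Definitions~\ref{def:correct} and~\ref{def:security_index}. Fix an attack signal $\eta \in \mathcal{A}$, so $\eta \neq 0$ and $\|\eta\| < \delta(\Sigma)/2$. By Definition~\ref{def:correct}, to show $\eta$ is correctable I must show that for every $\eta' \in \mathcal{A}$ with $\eta' \neq \eta$ we have $\eta - \eta' \notin \mathcal{B}$. So take an arbitrary such $\eta'$; note $\eta - \eta' \neq 0$ by hypothesis.

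The key step is a triangle-type inequality for the weight $\|\cdot\|$. Since $\mathrm{supp}(\eta - \eta') \subseteq \mathrm{supp}(\eta) \cup \mathrm{supp}(\eta')$, taking cardinalities gives
\begin{equation}\label{eq:weight_subadd}
\|\eta - \eta'\| \;\le\; \|\eta\| + \|\eta'\|.
\end{equation}
Both $\eta$ and $\eta'$ lie in $\mathcal{A}$, so each has weight strictly less than $\delta(\Sigma)/2$; hence from \eqref{eq:weight_subadd} we get $\|\eta - \eta'\| < \delta(\Sigma)/2 + \delta(\Sigma)/2 = \delta(\Sigma)$. Thus $\eta - \eta'$ is a nonzero signal of weight strictly less than $\delta(\Sigma)$.

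Finally I invoke the definition of the security index: by Definition~\ref{def:security_index}, every nonzero element of $\mathcal{B}$ has weight at least $\delta(\Sigma)$. Since $\eta - \eta'$ is nonzero and has weight strictly below $\delta(\Sigma)$, it cannot belong to $\mathcal{B}$, i.e.\ $\eta - \eta' \notin \mathcal{B}$. As $\eta'$ was arbitrary, Definition~\ref{def:correct} is satisfied and $\eta$ is correctable; since $\eta$ was an arbitrary element of $\mathcal{A}$, all attack signals in $\mathcal{A}$ are correctable.

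I do not anticipate a genuine obstacle here: the argument is essentially a one-line application of subadditivity of the support-based weight together with the definition of $\delta(\Sigma)$. The only point requiring a little care is the strict-versus-non-strict inequality bookkeeping — one needs $\|\eta\|, \|\eta'\| < \delta(\Sigma)/2$ (strict) so that the sum is strictly less than $\delta(\Sigma)$, which is exactly how $\mathcal{A}$ is defined in the statement — and confirming $\eta - \eta' \neq 0$, which is immediate from $\eta' \neq \eta$.
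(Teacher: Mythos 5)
Your proposal is correct and follows essentially the same route as the paper's own proof: a subadditivity (triangle-type) bound $\|\eta-\eta'\|\leq\|\eta\|+\|\eta'\|<\delta(\Sigma)$ followed by the definition of the security index to conclude $\eta-\eta'\notin\mathcal{B}$. The only difference is that you spell out the support-inclusion justification for the weight inequality and explicitly note $\eta-\eta'\neq 0$, details the paper leaves implicit.
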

\begin{proof}
	Consider an attack signal $\eta$ from $\mathcal{A}$. If there exists another non-zero $\eta'$ with $\|\eta'\|<\delta(\Sigma)/2$ and $\eta'\neq\eta$, then $\|\eta-\eta'\|\leq\|\eta\|+\|\eta'\|<\delta(\Sigma)$ which implies $\eta-\eta'\notin\mathcal{B}$. According to Definition \ref{def:correct}, $\eta$ is then correctable and this completes the proof.
\end{proof}

It follows from the above theorem that the value of $\delta(\Sigma)/2$ can be viewed as the minimum number of sensors that have to be attacked in order to implement an uncorrectable attack. For example, if a system $\Sigma$ with $N=7$ sensors has security index $\delta(\Sigma) = 5$, then at least 3 of the 7 sensors have to be attacked to achieve an uncorrectable attack.

\begin{rem}
	Regarding the detection and correction of an attack signal $\eta$, there are three different situations: 
	
	Case 1: $\|\eta\|<\delta(\Sigma)/2$, i.e., the attack signal is both detectable and correctable. This means that one can detect the  existence of $\eta$ and there exists a unique correction for this attack signal.
	
	Case 2: $\delta(\Sigma)/2\leq\|\eta\|<\delta(\Sigma)$, i.e., the attack signal is detectable but not correctable. This means that one can detect the existence of $\eta$ but cannot guarantee the existence of a unique correction for this attack signal.
	
	Case 3: $\delta(\Sigma)\leq\|\eta\|$, i.e., the attack signal is neither detectable nor correctable. This means that one cannot guarantee that the attack signal can be detected, also one cannot guarantee the existence of a unique correction for this attack signal.
\end{rem}
\section{Computation of the security index}\label{sec:computation}
In this section, we seek to express the security index $\delta(\Sigma)$ of a LTI system $\Sigma$ in terms of its kernel representation \eqref{eq:system}. The following preliminaries on polynomial matrices are required.

A square polynomial matrix is called \emph{unimodular} if it has a polynomial inverse and a non-square polynomial matrix is called \emph{left (right) unimodular} if it has a polynomial left (right) inverse. Two polynomial matrices $R(\xi)$ and $Q(\xi)$ of the same size are called \emph{left unimodularly equivalent} if there exists a unimodular matrix $U(\xi)$ such that $Q(\xi)=U(\xi)R(\xi)$. In the next two definitions $\mathcal{J}$ is assumed to be a subset of $\{1,\dots,N\}$.

\begin{definition}
	Define $R_\mathcal{J}(\xi)$ as an $N\times\|\mathcal{J}\|$ matrix that consists of the $i$-th columns of $R(\xi)$ where $i\in\mathcal{J}$.
\end{definition}

\begin{definition}
	Define $y_\mathcal{J}$ as the signal that consists of the $i$-th components of $y$ where $i\in\mathcal{J}$; thus $y_{\bar{\mathcal{J}}}$ is the signal that consists of the $i$-th components of $y$ where $i\notin\mathcal{J}$.
\end{definition}

The following theorem is a reformulation of a result in~\cite{chongK2016mtns,Chong2016cdc}.
\begin{thm} [Security index calculation]\label{Theorem:kernel_sec_ind}
	Consider a system $\Sigma$ whose behavior $\mathcal{B}$ is non-zero and given by \eqref{eq:system}, where $R(\xi)$ has full rank. Then 
	\begin{equation}
	\delta(\Sigma) = L+1, \label{eq:kernel_sec_ind}
	\end{equation}
	where $L$ is the largest integer such that for any subset $\mathcal{J}\subseteq\{1,\dots,N\}$ of cardinality $L$, the $N\times L$ matrix $R_{\mathcal{J}}(\xi)$ is left unimodular.
\end{thm}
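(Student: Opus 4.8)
The plan is to prove the two inequalities $\delta(\Sigma)\ge L+1$ and $\delta(\Sigma)\le L+1$ separately. Both reduce to one ancillary fact about autonomous behaviours: for a polynomial matrix $M(\xi)\in\mathbb{R}[\xi]^{N\times k}$ with $k\le N$, the behaviour $\{w:\mathbb{Z}_+\to\mathbb{R}^k\mid M(\sigma)w=0\}$ is trivial if and only if $M(\xi)$ is left unimodular. The ``if'' direction is immediate: a polynomial left inverse $G(\xi)$ of $M(\xi)$ gives $w=G(\sigma)M(\sigma)w=0$ whenever $M(\sigma)w=0$. For the ``only if'' direction I would pass to the Smith form $M=U\Delta V$ over the principal ideal domain $\mathbb{R}[\xi]$, with $U,V$ unimodular; then $M$ fails to be left unimodular exactly when $\Delta$ has a zero column or a diagonal entry of positive degree, and in either case one exhibits a nonzero $z$ with $\Delta(\sigma)z=0$ --- using, in the second case, that $\ker d(\sigma)$ is a nonzero space (of dimension $\deg d$) for a scalar polynomial $d$ of positive degree --- so that $w=V^{-1}(\sigma)z$ is a nonzero element of the kernel of $M(\sigma)$, since $V^{-1}$ is polynomial.

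A preliminary remark: $L\le N-1$, for if $L=N$ then $R(\xi)=R_{\{1,\dots,N\}}(\xi)$ would be left unimodular, hence unimodular (being square), and the ancillary fact would force $\mathcal{B}=\{0\}$, against the hypothesis. Hence subsets of $\{1,\dots,N\}$ of cardinality $L$ and of cardinality $L+1$ both exist, which is what the two halves of the proof use.

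For $\delta(\Sigma)\ge L+1$: let $0\neq y\in\mathcal{B}$, set $\mathcal{S}=\mathrm{supp}(y)$ so that $\|y\|=|\mathcal{S}|\ge1$, and suppose for contradiction that $|\mathcal{S}|\le L$. Enlarge $\mathcal{S}$ to a set $\mathcal{J}$ of cardinality exactly $L$. Since $y_i=0$ for every $i\notin\mathcal{S}\subseteq\mathcal{J}$, we have $R_{\mathcal{J}}(\sigma)y_{\mathcal{J}}=\sum_{i\in\mathcal{J}}R_i(\sigma)y_i=\sum_{i=1}^{N}R_i(\sigma)y_i=R(\sigma)y=0$, while $y_{\mathcal{J}}\neq0$ because it still carries the nonzero components of $y$. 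By the ancillary fact this contradicts the left unimodularity of $R_{\mathcal{J}}(\xi)$ guaranteed by the definition of $L$. So $\|y\|\ge L+1$ for every nonzero $y\in\mathcal{B}$, and taking the minimum gives $\delta(\Sigma)\ge L+1$.

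For $\delta(\Sigma)\le L+1$: by maximality of $L$, not every cardinality-$(L+1)$ subset gives a left unimodular matrix, so fix $\mathcal{J}$ with $|\mathcal{J}|=L+1$ and $R_{\mathcal{J}}(\xi)$ not left unimodular. By the ancillary fact there is a nonzero $w:\mathbb{Z}_+\to\mathbb{R}^{L+1}$ with $R_{\mathcal{J}}(\sigma)w=0$; let $y:\mathbb{Z}_+\to\mathbb{R}^N$ have the components of $w$ in the positions indexed by $\mathcal{J}$ and zeros elsewhere. Then $y\neq0$, $\mathrm{supp}(y)\subseteq\mathcal{J}$ so $\|y\|\le L+1$, and $R(\sigma)y=R_{\mathcal{J}}(\sigma)w=0$, i.e.\ $y\in\mathcal{B}$; hence $\delta(\Sigma)\le\|y\|\le L+1$, and combined with the previous paragraph $\delta(\Sigma)=L+1$. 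The main obstacle is the ancillary fact, and specifically the implication ``$R_{\mathcal{J}}(\xi)$ not left unimodular $\Rightarrow$ nontrivial behaviour'': because over $\mathbb{Z}_+$ the shift $\sigma$ is not injective, one cannot simply quote the standard bi-infinite-signal theory, and the nonzero kernel signal must be produced explicitly (via the Smith form, or equivalently via a polynomial syzygy of the columns of $R_{\mathcal{J}}(\xi)$). The rest is bookkeeping about supports and column selections.
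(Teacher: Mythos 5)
Your proposal is correct and follows essentially the same route as the paper: the same split into $\delta(\Sigma)\le L+1$ (via a non--left-unimodular $(L+1)$-column submatrix yielding a low-weight trajectory) and $\delta(\Sigma)\ge L+1$ (via the support of a minimum-weight trajectory). The only difference is that you explicitly prove, via the Smith form, the equivalence ``left unimodular $\Leftrightarrow$ trivial kernel over $\mathbb{Z}_+$'' that the paper's proof invokes without comment, which makes your version more self-contained.
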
 
\begin{proof}
	Clearly, there exists a subset $\mathcal{J}\subseteq\{1,\dots,N\}$ of cardinality $L+1$ such that $R_{{\mathcal{J}}}(\xi)$ is not left unimodular. Thus there exists a nonzero signal $y^\star$ that satisfies $R_{{\mathcal{J}}}(\sigma)y^\star = 0$. Now let $y:\mathbb{Z}_+\rightarrow\mathbb{R}^N$ be the signal satisfying $y_{\mathcal{J}}= y^\star$ and $y_{\bar{\mathcal{J}}}= 0$. Then $y \in \mathcal{B}$ and $\|y\| = \|y^\star\| \leq L+1$. This implies that 
	\begin{equation}
	\delta(\Sigma) \leq L+1. \label{eq:kernel_sec_ind_1}
	\end{equation}
	To prove that also $\delta(\Sigma) \geq L+1$, let $y$ be a signal in $\mathcal{B}$ of weight $\delta(\Sigma)$. Define $\bar{\mathcal{J}} \subset \{1,2, \ldots , N\}$ as the set of cardinality $\delta(\Sigma)$ for which $y_{{\mathcal{J}}} = 0$. Then $R_{\bar{\mathcal{J}}}(\xi) y_{\bar{\mathcal{J}}} = 0$ and because $y_{\bar{\mathcal{J}}} \neq 0$ it follows that $R_{\bar{\mathcal{J}}}(\xi)$ is not left unimodular. This implies that $L< \delta(\Sigma)$. Because of~(\ref{eq:kernel_sec_ind_1}) it follows that equation~(\ref{eq:kernel_sec_ind}) holds.
\end{proof}
\begin{cor} \label{cor:ker_max}(\cite[Cor. IV.6]{Chong2016cdc})
	The system $\Sigma$ in \eqref{eq:system} is maximally secure if and only if all $N\times (N-1)$ submatrices of $R(\xi)$ are left unimodular.
\end{cor}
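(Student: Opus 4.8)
The plan is to read the statement off directly from Theorem~\ref{Theorem:kernel_sec_ind}. By definition, $\Sigma$ is maximally secure precisely when $\delta(\Sigma)=N$, and Theorem~\ref{Theorem:kernel_sec_ind} rewrites this as $L=N-1$, where $L$ is the largest integer such that every $N\times L$ column submatrix $R_{\mathcal{J}}(\xi)$ is left unimodular. So the task reduces to showing that $L=N-1$ holds if and only if all $N\times(N-1)$ column submatrices of $R(\xi)$ are left unimodular.

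First I would note that $L\le N-1$ always holds under the standing hypotheses. Indeed, the only $N\times N$ column submatrix is $R(\xi)$ itself, and its being left unimodular would make it (being square) unimodular, which would force $\mathcal{B}=\{0\}$ and contradict the assumption that $\mathcal{B}$ is non-zero. Hence $R(\xi)$ is not left unimodular and $L<N$, i.e.\ $L\le N-1$.

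Next, I would record the monotonicity that makes $L$ well defined: deleting a column from a left unimodular matrix leaves a left unimodular matrix (partition a polynomial left inverse conformally with the retained and deleted columns to read off a left inverse of the retained block). Consequently, if every $N\times(N-1)$ submatrix of $R(\xi)$ is left unimodular, then so is every $N\times m$ submatrix for all $m\le N-1$, whence $L\ge N-1$; combined with the previous paragraph this gives $L=N-1$, i.e.\ $\delta(\Sigma)=N$. Conversely, if $L=N-1$ then, by the very definition of $L$ in Theorem~\ref{Theorem:kernel_sec_ind}, every $N\times(N-1)$ column submatrix of $R(\xi)$ is left unimodular. This establishes the equivalence and hence the corollary.

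There is essentially no hard step here, since the content is already packaged in Theorem~\ref{Theorem:kernel_sec_ind}; the only things to spell out are the bound $L\le N-1$ coming from non-triviality of $\mathcal{B}$ and the monotonicity of left unimodularity under column deletion. The one mild point to be careful about is using the ``largest integer'' clause of Theorem~\ref{Theorem:kernel_sec_ind} in the right direction, namely to convert $L=N-1$ back into the assertion about \emph{all} $(N-1)$-column submatrices.
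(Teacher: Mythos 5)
Your proof is correct and follows the route the paper intends: the corollary is stated as an immediate consequence of Theorem~\ref{Theorem:kernel_sec_ind} (the paper gives no proof, deferring to the cited reference), and reading off $\delta(\Sigma)=N \Leftrightarrow L=N-1$ is exactly the argument. The two details you spell out --- that $L\le N-1$ because a square left unimodular $R(\xi)$ would force $\mathcal{B}=\{0\}$, and that left unimodularity is preserved under column deletion so checking the $(N-1)$-column case suffices --- are the right points to make explicit and are handled correctly.
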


\section{Canonical kernel representation}\label{sec:sim_tra}
When we describe a system's behavior $\mathcal{B}$ using a kernel representation $R(\sigma)y=0$, the polynomial matrix $R(\xi)$ is not unique. In this section, we recall results from the literature around equivalent kernel representations for $\mathcal{B}$. We then use this to single out a canonical form of $R(\xi)$ which is vital for our results on attack correction.
\begin{thm}\label{Theorem:uni_man}(e.g. \cite[Theorem 3.9]{kuijbook94})
	Consider two systems $\Sigma$ and $\Sigma'$ whose behaviors $\mathcal{B}$ and $\mathcal{B}'$ are given by $\mathcal{B} = \{y|R(\sigma)y = 0\}$ and $\mathcal{B}' = \{y'|R'(\sigma)y' = 0\}$, respectively. Assume that $R(\xi)$ and $R'(\xi)$ are square matrices of the same size, then $\mathcal{B} = \mathcal{B}'$ if and only if $R(\xi)$ and $R^{'}(\xi)$ are left unimodularly equivalent.
\end{thm}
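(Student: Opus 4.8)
The plan is to prove the two implications separately. I would dispatch the ``if'' direction by a direct operator computation: if $R'(\xi) = U(\xi)R(\xi)$ with $U(\xi)$ unimodular, then for every signal $y$ one has $R'(\sigma)y = U(\sigma)\big(R(\sigma)y\big)$, so $R(\sigma)y = 0$ forces $R'(\sigma)y = 0$ and hence $\mathcal{B}\subseteq\mathcal{B}'$; since $U(\xi)^{-1}$ is again a polynomial matrix, the same reasoning applied to $R(\xi) = U(\xi)^{-1}R'(\xi)$ gives $\mathcal{B}'\subseteq\mathcal{B}$, so $\mathcal{B} = \mathcal{B}'$. All the substance is therefore in the converse.

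For the ``only if'' direction I would assume $\mathcal{B} = \mathcal{B}'$ and invoke the following annihilator property of a square, full-rank kernel representation: a polynomial row vector $a(\xi)^T$ satisfies $a(\sigma)^T y = 0$ for all $y\in\mathcal{B}$ if and only if $a(\xi)^T = f(\xi)^T R(\xi)$ for some polynomial row vector $f(\xi)^T$; equivalently, the annihilator of $\mathcal{B}$ inside $\mathbb{R}^{1\times N}[\xi]$ is exactly the $\mathbb{R}[\xi]$-row module of $R(\xi)$. Granting this, every row of $R'(\xi)$ annihilates $\mathcal{B}$ (because $\mathcal{B}\subseteq\mathcal{B}'$), so $R'(\xi) = F(\xi)R(\xi)$ for some polynomial matrix $F(\xi)$, and by the symmetric argument $R(\xi) = G(\xi)R'(\xi)$ for some polynomial matrix $G(\xi)$. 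Substituting gives $R(\xi) = G(\xi)F(\xi)R(\xi)$; since $\det R(\xi)\not\equiv 0$, the matrix $R(\xi)$ is invertible over the field of rational functions, so cancelling it on the right yields $G(\xi)F(\xi) = \mathbb{I}_N$. Thus $F(\xi)$ is a square polynomial matrix possessing a polynomial inverse, i.e.\ unimodular, and $R'(\xi) = F(\xi)R(\xi)$ exhibits the required left unimodular equivalence.

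I expect the annihilator property to be the main obstacle, since it is the only step that goes beyond elementary polynomial-matrix algebra: one must show that $f(\xi)^T := a(\xi)^T R(\xi)^{-1}$, which a priori is only a vector of rational functions, is actually polynomial. For signals on all of $\mathbb{Z}$ this is the classical fundamental principle for autonomous behaviours; for signals on $\mathbb{Z}_+$ the same conclusion holds, and I would argue by contradiction: a prospective pole $\lambda\in\mathbb{C}$ of an entry of $f(\xi)^T$ must be a root of $\det R(\xi)$, so $\mathcal{B}$ contains an explicit nonzero mode trajectory associated with $\lambda$ — of the form $t\mapsto p(t)\lambda^{t}$, or a finitely supported signal when $\lambda=0$ — on which $a(\sigma)^T$ does not vanish, contradicting the hypothesis. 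This is precisely the content of \cite[Theorem 3.9]{kuijbook94}, to which I would defer for the full verification.
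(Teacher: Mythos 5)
The paper offers no proof of this theorem --- it is quoted from the literature (e.g.\ \cite[Theorem 3.9]{kuijbook94}) and used as a black box --- so there is no in-paper argument to compare against; your proof is a correct reconstruction of the standard argument behind the cited result. The ``if'' direction is exactly the routine operator computation, and in the ``only if'' direction you correctly isolate the one nontrivial ingredient, namely the annihilator (module--behaviour duality) property that a polynomial row vector vanishing on $\ker R(\sigma)$ must lie in the row module of $R(\xi)$; your sketch of why $f(\xi)^T = a(\xi)^T R(\xi)^{-1}$ is polynomial (any pole must be a root of $\det R(\xi)$, and each such root contributes an explicit mode trajectory in $\mathcal{B}$, including the finitely supported modes for $\lambda=0$ that are specific to the $\mathbb{Z}_+$ time axis) is the right one, and deferring its full verification to the reference is reasonable since that is precisely what the paper itself does. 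One small remark: the cancellation of $R(\xi)$ in $G(\xi)F(\xi)R(\xi)=R(\xi)$ requires $\det R(\xi)\not\equiv 0$, which the theorem statement does not repeat but which is a standing assumption of the paper (Section \ref{sec:prob_statement}); under that assumption the step is valid, and note that equality of behaviours then forces $R'(\xi)$ to be full rank as well, since an autonomous behaviour cannot coincide with one admitting free variables.
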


Minimal lag kernel representations (i.e., where the row degrees of $R(\xi)$ are minimal, namely the system's observability indices) are relevant to many problems in systems and control. However, it turns out that minimal lag representations do not lend themselves well to the design of attack correction methods. Instead, we find that a different canonical form needs to be used, namely the Kronecker-Hermite form. We recall this theory in the next theorem and more generally in Theorem~\ref{canon_general}. In the next section we will see that it serves as an important tool for our attack correction method.
\begin{thm}\label{Theorem_canonical}\textbf{\textup{(Kronecker-Hermite canonical kernel representation of a maximally secure system)}}
	Let $R(\xi)$ be a $N\times N$ polynomial matrix whose determinant is non-zero. Assume that all $N\times (N-1)$ submatrices of $R(\xi)$ are left unimodular. Then there exists a unimodular matrix $U(\xi)$ such that
	\begin{equation}\label{eq:des_form}
	U(\xi){R}(\xi) = \left[
	\begin{array}{c|c}
	\mathbb{I}_{N-1}&\begin{array}{c}
	-c_{1}(\xi)\\
	-c_{2}(\xi)\\
	\vdots\\
	-c_{N-1}(\xi)
	\end{array}\\
	\hline
	\begin{array}{ccc}
	0&\dots&0
	\end{array}&a(\xi)
	\end{array}
	\right],
	\end{equation}
	where $c_j(\xi)$ is coprime with $a(\xi)$ and $\DEG c_j(\xi) < \DEG a(\xi)$ for all $j\in\{1,...,N-1\}$ .
\end{thm}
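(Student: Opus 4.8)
The approach is to reduce $R(\xi)$ to a canonical form via unimodular row operations in two stages: first establish a Hermite-like triangular shape, then adjust the entries in the last column using the coprimality and degree constraints. I would start by invoking Theorem~\ref{Theorem:uni_man} to recall that left-unimodular equivalence is exactly what preserves the behaviour, so any $U(\xi)R(\xi)$ we produce is a legitimate kernel representation of the same $\mathcal{B}$. Since $\DET R(\xi) \neq 0$, the matrix $R(\xi)$ is left unimodularly equivalent to an upper-triangular form; the Kronecker--Hermite normal form over the polynomial ring $\mathbb{R}[\xi]$ then lets me assume the diagonal entries are monic and each off-diagonal entry in a given column has degree strictly less than the pivot on its row.

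**Key steps.**
First, I would use the hypothesis that every $N\times(N-1)$ submatrix of $R(\xi)$ is left unimodular. Deleting the last column of $R(\xi)$ gives an $N\times(N-1)$ left-unimodular matrix; by the standard characterization this means it can be completed to a unimodular $N\times N$ matrix, equivalently there is a unimodular $U_1(\xi)$ carrying those first $N-1$ columns to $\begin{bmatrix}\mathbb{I}_{N-1}\\ 0\end{bmatrix}$. Applying the same $U_1(\xi)$ to all of $R(\xi)$ produces a matrix of the shape in \eqref{eq:des_form} but with as-yet-unconstrained polynomials $-c_j(\xi)$ in the first $N-1$ rows of the last column and some polynomial $a(\xi)$ in the bottom-right corner. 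Second, I would note $a(\xi)$ cannot be zero: if it were, the last row would be zero and $R(\xi)$ would be singular, contradicting $\DET R(\xi)\neq 0$; in fact $\DET(U_1 R) = \pm a(\xi)$ so $a(\xi)$ carries the whole determinant. Third, I would reduce each $c_j(\xi)$ modulo $a(\xi)$: writing $c_j(\xi) = q_j(\xi)a(\xi) + \tilde c_j(\xi)$ with $\DEG \tilde c_j(\xi) < \DEG a(\xi)$, the row operation that adds $q_j(\xi)$ times the last row to the $j$-th row replaces $-c_j(\xi)$ by $-\tilde c_j(\xi)$ without disturbing the identity block; this is a unimodular operation, and composing it with $U_1(\xi)$ gives the final $U(\xi)$.

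**The coprimality.**
The remaining point — and the one I expect to be the main obstacle — is showing $\tilde c_j(\xi)$ is coprime with $a(\xi)$ for each $j$. This is where the left-unimodularity of the other $N\times(N-1)$ submatrices must be used: delete the $j$-th column of the reduced matrix $U(\xi)R(\xi)$. The resulting $N\times(N-1)$ matrix has the identity pattern on the remaining $N-2$ of the first $N-1$ rows, a single nonzero entry $-\tilde c_j(\xi)$ in the $j$-th row within the last column, and $a(\xi)$ in the bottom-right; left-unimodularity of this submatrix (inherited from the hypothesis on $R(\xi)$, since left-unimodular equivalence preserves left-unimodularity of column subsets) is equivalent, after clearing the identity rows, to the $2\times 1$ (effectively) condition that $\GCD(\tilde c_j(\xi), a(\xi)) = 1$. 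Spelling out this last equivalence carefully — that a polynomial matrix is left unimodular iff its maximal minors have trivial GCD, and identifying which minors survive — is the technical heart of the argument; everything else is bookkeeping with elementary row operations.

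Throughout I would lean on two facts taken as known: an $N\times k$ polynomial matrix with $k<N$ is left unimodular iff its $k\times k$ minors have greatest common divisor~$1$ (equivalently, it can be embedded as the first $k$ columns of a unimodular matrix), and left unimodular equivalence preserves this property column-subset-wise. No macros beyond those already in the preamble are needed; the construction of $U(\xi)$ is explicit as a product of the completion matrix $U_1(\xi)$ and a unipotent matrix recording the divisions $c_j = q_j a + \tilde c_j$.
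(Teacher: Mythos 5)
Your proposal is correct, and it reaches the canonical form by a more self-contained route than the paper. The paper first triangularizes $R(\xi)$ (citing Theorem B.1.1 of \cite{poldermanW97}), uses left unimodularity of the first $N-1$ columns only to force the first $N-1$ diagonal entries to be nonzero constants, and then invokes the general Kronecker--Hermite canonical form theorem (\cite{Fuhrmann15}, \cite{colonius01}, \cite{hinrichsen83}) to obtain the shape \eqref{eq:des_form} with $\DEG c_j(\xi)<\DEG a(\xi)$; coprimality is then deduced, as you do, from the fact that left unimodular equivalence preserves left unimodularity of column submatrices (via Corollary \ref{cor:ker_max}). You instead exploit the hypothesis from the outset: since the first $N-1$ columns form a left unimodular (left prime) matrix over the PID $\mathbb{R}[\xi]$, a Smith-form/completion argument gives a unimodular $U_1(\xi)$ sending them to $\left[\begin{smallmatrix}\mathbb{I}_{N-1}\\0\end{smallmatrix}\right]$ in one step, the degree condition follows from explicit Euclidean division implemented as unipotent row operations, and the coprimality is verified by computing the maximal minors of the $j$-th-column-deleted submatrix (they are, up to sign, $a(\xi)$, $\tilde c_j(\xi)$ and zeros, so left unimodularity is exactly $\text{GCD}(\tilde c_j(\xi),a(\xi))=1$); all of these steps check out. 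What your version buys is an explicit, citation-free construction of $U(\xi)$ tailored to the maximally secure case; what the paper's version buys is machinery that transfers verbatim to the general case of Theorem~\ref{canon_general}, where the lower-right block $D(\xi)$ is no longer scalar and your ``reduce the leading columns to the identity'' step must be supplemented by triangularization and column-degree reduction of $D(\xi)$ --- precisely what the cited Kronecker--Hermite theorem supplies.
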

\begin{proof}
	It follows from e.g. Theorem B.1.1 in \cite{poldermanW97} that there exists a unimodular matrix $U_0(\xi)$ such that $U_0(\xi)R(\xi)$ is an upper triangular polynomial matrix, say $R_0(\xi)$, written as 
	\begin{align*}
	&R_0(\xi) = \\
	&\begin{bmatrix}\label{eq:upper_tri}
	a_1(\xi)&b_{12}(\xi)&b_{13}(\xi)&\dots&b_{1(N-1)}(\xi)&b_{1N}(\xi)\\
	0&a_2(\xi)&b_{23}(\xi)&\dots&b_{2(N-1)}(\xi)&b_{2N}(\xi)\\
	0&0&a_3(\xi)&\dots&b_{3(N-1)}(\xi)&b_{3N}(\xi)\\
	\vdots&\vdots&\vdots&\ddots&\vdots&\vdots\\
	0&0&0&\dots&a_{N-1}(\xi)&b_{(N-1)N}(\xi)\\
	0&0&0&\dots&0&a_N(\xi)	
	\end{bmatrix}.
	\end{align*}
	Since all $N\times (N-1)$ submatrices of $R(\xi)$ are left unimodular, it follows that in particular the matrix formed by the first $N-1$ columns is left unimodular.  This implies that all its diagonal elements are nonzero constants. Without restrictions, $R_0(\xi) $ can then be written as
	\begin{equation}
	\begin{bmatrix}\label{eq:upper_tri_1}
	1&b_{12}(\xi)&b_{13}(\xi)&\dots&b_{1(N-1)}(\xi)&b_{1N}(\xi)\\
	0&1&b_{23}(\xi)&\dots&b_{2(N-1)}(\xi)&b_{2N}(\xi)\\
	0&0&1&\dots&b_{3(N-1)}(\xi)&b_{3N}(\xi)\\
	\vdots&\vdots&\vdots&\ddots&\vdots&\vdots\\
	0&0&0&\dots&1&b_{(N-1)N}(\xi)\\
	0&0&0&\dots&0&a_N(\xi)	
	\end{bmatrix} .
	\end{equation}
	
	It has been shown in e.g.~\cite[Theorem 2.40]{Fuhrmann15}, ~\cite[Theorem 7.5]{colonius01} or~\cite{hinrichsen83} that there exists a unimodular matrix $U_1(\xi)$ such that $U_1(\xi)R_0(\xi)$ is in Kronecker-Hermite canonical form as in \eqref{eq:des_form} with $\DEG a(\xi)>\DEG c_j(\xi)$. Finally, according to Corollary \ref{cor:ker_max}, the matrix in~\eqref{eq:des_form} has the property that all its $N\times(N-1)$ submatrices are left unimodular. It can be easily checked that this implies that $c_j(\xi)$ is coprime with $a(\xi)$, i.e., $\text{GCD}(c_j(\xi),a(\xi)) = 1$ for all $j\in\{1,...,N-1\}$, and this completes the proof. 
\end{proof}

\section{Attack detection and correction algorithms} \label{sec:attack_detect} 
In this section, we propose methods to achieve attack detection and correction. Recall (\ref{eq:system}) and (\ref{eq:system_attacked}), the output generated by the attacked system $\Sigma_\mathcal{A}$ is $r = y+\eta$ where $y$ is the attack-free sensor output of the system $\Sigma$. In linear coding theory, syndrome computation is an effective method for error detection, see e.g.~\cite[Ch. 7]{Proakis08}. Similarly we will work with a signal that we call the ``residual signal", defined as $s = R(\sigma)r$ to perform attack detection (Section \ref{sec:Attack_Detection_Alg}). To achieve attack correction (Sections \ref{sec:Correction_MS} and \ref{Correction_GC}), we use a majority vote rule reminiscent of decoding techniques such as in~\cite{reedS60}.

\subsection{Attack detection}
\label{sec:Attack_Detection_Alg}

For attack detection, we propose Algorithm \ref{alg:att_det} below. 
\begin{algorithm}
	\caption{Attack detection}\label{alg:att_det}
	\begin{algorithmic}[1]
		\Procedure{}{$R(\xi),r,\eta$} \newline\Comment{Given $R(\xi)$ and $r$, detect whether $\eta$ is the zero signal}.
		\State Calculate $s = R(\sigma)r$.
		\If{$s=0$} \space decide no attack, i.e., $\eta = 0$.
		\Else \space 
		decide attack occurred, i.e., $\eta \neq 0$.
		\EndIf
		\EndProcedure
	\end{algorithmic}
\end{algorithm}

\begin{thm}[Attack detection]\label{Theorem:Attack_Detection}
	Consider a system given by \eqref{eq:system}. Let $r=y+\eta$ be a received signal with $y \in \mathcal{B}$. Then the residual signal $s=0$ if and only if $\eta \in \mathcal{B}$. Thus Algorithm~\ref{alg:att_det} gives the correct result if $\eta$ is detectable.
\end{thm}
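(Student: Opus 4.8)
The plan is to exploit the linearity of the polynomial operator $R(\sigma)$ acting on signals $\mathbb{Z}_+ \rightarrow \mathbb{R}^N$. First I would simply compute the residual signal for the received signal $r = y + \eta$: since $R(\sigma)$ is a linear operator, $s = R(\sigma) r = R(\sigma)(y+\eta) = R(\sigma)y + R(\sigma)\eta$. Because $y \in \mathcal{B}$, Definition~\ref{def:be} gives $R(\sigma)y = 0$, so the residual collapses to $s = R(\sigma)\eta$. This identity is the whole engine of the argument.

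Next I would read off the equivalence. By the displayed identity $s = R(\sigma)\eta$, we have $s = 0$ if and only if $R(\sigma)\eta = 0$, and by Definition~\ref{def:be} the latter holds precisely when $\eta \in \mathcal{B}$. This establishes the ``if and only if'' in the statement.

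Finally I would connect this to Algorithm~\ref{alg:att_det} and the notion of detectability from Definition~\ref{def:detect}. Suppose $\eta$ is detectable, i.e.\ $\eta \neq 0$ and $\eta \notin \mathcal{B}$. Then by the equivalence just proved, $s \neq 0$, so the algorithm enters its \textbf{else} branch and correctly decides that an attack occurred. Conversely, if there is no attack, $\eta = 0 \in \mathcal{B}$, so $s = 0$ and the algorithm correctly decides $\eta = 0$. Hence the algorithm returns the correct verdict whenever $\eta$ is either zero or detectable.

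There is no real obstacle here; the only point requiring a word of care is that the equivalence $s = 0 \Leftrightarrow \eta \in \mathcal{B}$ does \emph{not} say the algorithm is always correct: a nonzero but undetectable attack $\eta \in \mathcal{B}\setminus\{0\}$ produces $s = 0$ and is wrongly reported as ``no attack''. This is precisely why the guarantee is stated only for detectable $\eta$, and it is worth flagging explicitly so the reader sees the role of Theorem~\ref{Theorem:detect} (the bound $\|\eta\| < \delta(\Sigma)$) in ensuring detectability in practice.
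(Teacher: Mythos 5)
Your proof is correct and follows essentially the same route as the paper: the key identity $s = R(\sigma)r = R(\sigma)y + R(\sigma)\eta = R(\sigma)\eta$, from which the equivalence and the algorithm's correctness for detectable $\eta$ follow immediately. The extra remarks on undetectable attacks are accurate but simply elaborate what the paper leaves implicit.
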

\begin{proof}
	This follows straightforwardly from
	\begin{equation*}
	s=R(\sigma ) r=R(\sigma ) (y+\eta )=R(\sigma ) y+R(\sigma )\eta  =R(\sigma )\eta .
	\end{equation*}	
\end{proof}

\subsection{Attack correction for a maximally secure system}
\label{sec:Correction_MS}
In this section we show how the Kronecker-Hermite canonical form kernel representation of the previous section can be used to perform attack correction for a maximally secure system. Without loss of generality, we assume that $R(\xi)$ is in the Kronecker-Hermite canonical form \eqref{eq:des_form}. Thus the system is given by
\begin{equation}
\label{eq:ms_system_equation}
\begin{bmatrix}
\mathbb{I}_N\\0
\end{bmatrix}y = \begin{bmatrix}
c_1(\sigma)\\\vdots\\c_{N-1}(\sigma)\\1\\a(\sigma)
\end{bmatrix}y_N .
\end{equation}
Before defining our method of attack correction, we need the following definitions and computations:
\begin{itemize}
	\item Define polynomials  $p_j(\xi)$ and $q_j(\xi)$ satisfying 
	\begin{equation}\label{eq:coprime}
	\begin{bmatrix}
	p_j(\xi)&q_j(\xi)
	\end{bmatrix}\begin{bmatrix}
	c_j(\xi)\\a(\xi)
	\end{bmatrix} = 1,\forall j\in\{1,...,N-1\}.
	\end{equation}
	Note that the existence of $p_j(\xi)$ and $q_j(\xi)$ follows from the fact that $\text{GCD}(c_j(\xi),a(\xi)) = 1$; the Extended GCD Algorithm (e.g. Ch 4.2 in \cite{Stallings11}) can be used to find $p_j(\xi)$ and $q_j(\xi)$.
	\item The majority vote function over a set of signals $\{ v_1,v_2, \dots , v_L\}$, denoted by $\text{Maj}\{ v_1,v_2, \dots , v_L\}$, is defined to be the most frequently occurring signal in the set of $v_j$'s.
\end{itemize}

\begin{algorithm}
	\caption{Attack correction for a maximally secure system given by \eqref{eq:ms_system_equation}}\label{alg:att_corr}
	\begin{algorithmic}[1]
		\Procedure{}{$a(\xi), c_1(\xi),\ldots , c_{N-1}(\xi),r,\hat y$} \newline\Comment{Given $a(\xi)$, $c_j(\xi)$'s and $r$, compute $\hat y$}.
		\State Calculate 
		\begin{equation}\label{eq_majority}
		\hat y_N = \text{Maj}\{p_1(\sigma )r_1 , p_2(\sigma)r_2 , \ldots , p_{N-1}(\sigma) r_{N-1} , r_N \} ,
		\end{equation}
		where $p_j(\xi)$ is defined as in \eqref{eq:coprime}.
		\State $\hat y_j = c_j(\sigma ) \hat y_N$ for $j=1, 2, \ldots , N-1$.
		\State\textbf{return} $\hat{y} = \begin{bmatrix}
		\hat{y}_1&\hat{y}_2&\dots&\hat{y}_N
		\end{bmatrix}$.
		\EndProcedure
	\end{algorithmic}
\end{algorithm}
\begin{thm}\label{thm_maxSecureMain}
	Consider a maximally secure system $\Sigma$ given by \eqref{eq:system}. Let the received signal $r$ be input to Algorithm~\ref{alg:att_corr}. Assume that $r=y+\eta$ with $y \in \mathcal{B}$ and $\|{\eta}\|<N/2$. Then the output $\hat y$ of Algorithm~\ref{alg:att_corr} equals $y$.
\end{thm}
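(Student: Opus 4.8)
The plan is to combine the structure forced by the Kronecker--Hermite canonical form with a simple counting argument on the $N$ ``votes'' appearing in \eqref{eq_majority}. As already noted before the statement, by Theorem~\ref{Theorem:uni_man} together with Corollary~\ref{cor:ker_max} and Theorem~\ref{Theorem_canonical} we may assume without loss of generality that $R(\xi)$ is in the form \eqref{eq:des_form}; passing to a left unimodularly equivalent matrix changes neither $\mathcal{B}$, nor the weight of $\eta$, nor maximal security. Then, by \eqref{eq:ms_system_equation}, the attack-free signal $y\in\mathcal{B}$ satisfies $y_j = c_j(\sigma)y_N$ for $j=1,\ldots,N-1$ and $a(\sigma)y_N = 0$, so the received components are $r_j = c_j(\sigma)y_N + \eta_j$ for $j<N$ and $r_N = y_N + \eta_N$.

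The key step is to show that every ``clean'' vote --- i.e. the $i$-th vote for each index $i$ with $\eta_i = 0$ --- equals the signal $y_N$ exactly. For $i=N$ this is immediate, since then $r_N = y_N$. For $i = j\in\{1,\ldots,N-1\}$, I would read the B\'ezout relation \eqref{eq:coprime} as the operator identity $p_j(\sigma)c_j(\sigma) = 1 - q_j(\sigma)a(\sigma)$ acting on signals over $\mathbb{Z}_+$; applying it to $y_N$ and using $a(\sigma)y_N = 0$ yields $p_j(\sigma)c_j(\sigma)y_N = y_N$, hence $p_j(\sigma)r_j = y_N + p_j(\sigma)\eta_j = y_N$ whenever $\eta_j = 0$.

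Now the counting argument: $\|\eta\| < N/2$ means $\eta$ has fewer than $N/2$ nonzero components, so strictly more than $N/2$ of the indices $i$ satisfy $\eta_i = 0$. Hence strictly more than half of the $N$ votes in \eqref{eq_majority} coincide with $y_N$, so $y_N$ is the unique most frequently occurring value and $\text{Maj}$ returns $\hat y_N = y_N$. Finally, step~3 of Algorithm~\ref{alg:att_corr} gives $\hat y_j = c_j(\sigma)\hat y_N = c_j(\sigma)y_N = y_j$ for $j = 1,\ldots,N-1$, so $\hat y = y$, as claimed.

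I do not anticipate a genuine obstacle: the argument is essentially the substitution above plus the weight count. The only points deserving care are justifying that $p_j(\sigma)c_j(\sigma) = 1 - q_j(\sigma)a(\sigma)$ may legitimately be applied to $y_N$ as an operator identity (fine, since polynomials in the shift compose as polynomials), and observing that the strict inequality $\|\eta\| < N/2$ guarantees a strict majority and hence rules out a tie in \eqref{eq_majority}.
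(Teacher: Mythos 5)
Your proposal is correct and follows essentially the same route as the paper's own proof: the B\'ezout relation \eqref{eq:coprime} applied to $\bigl[\begin{smallmatrix}1\\0\end{smallmatrix}\bigr]y_j=\bigl[\begin{smallmatrix}c_j(\sigma)\\a(\sigma)\end{smallmatrix}\bigr]y_N$ gives $p_j(\sigma)r_j=y_N+p_j(\sigma)\eta_j$, so the clean indices form a strict majority among the $N$ votes and \eqref{eq_majority} returns $y_N$, after which $\hat y_j=c_j(\sigma)y_N=y_j$. You merely make explicit two points the paper leaves implicit (the legitimacy of the WLOG reduction to the canonical form and the exclusion of a tie), which is fine.
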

\begin{proof}
	We have
	\[
	\begin{bmatrix}
	1\\0
	\end{bmatrix} y_j = \begin{bmatrix}
	c_j(\sigma)\\a(\sigma)
	\end{bmatrix}y_N,
	\]
	so that it follows from \eqref{eq:coprime} that $p_j (\sigma ) y_j = y_N$ for $j=1, 2, \ldots , N$ (here we define $p_N(\xi) \equiv 1$). Now
	\[
	p_j (\sigma ) r_j = p_j (\sigma ) (y_j + \eta_j) = y_N + p_j (\sigma ) \eta_j 
	\]
	for $j=1, 2, \ldots , N$. Since $\|{\eta}\|<N/2$ it follows that \eqref{eq_majority} computes $\hat y_N = y_N$. Consequently $y_j$ can be found from $y_j = c_j(\sigma ) y_N$ for $j=1, 2, \ldots , N-1$ and this proves the theorem.
	
\end{proof}
Note that the results in this subsection require the system to be maximally secure. The next subsection deals with the general case where systems are not necessarily maximally secure. 
\subsection{Attack correction for the general case}
\label{Correction_GC}
\begin{thm}\label{canon_general}\textup{\textbf{(Kronecker-Hermite canonical kernel representation of $R(\xi)$---general case)}} Let $R(\xi)$ be a $N\times N$ polynomial matrix whose determinant is nonzero. Let $L$ be the largest integer such that, for any subset $\mathcal{J}\subseteq\{1,\dots,N\}$ of cardinality $L$, the $N\times L$ matrix $R_{\mathcal{J}}(\xi)$ is left unimodular. Then there exists a unimodular matrix $U(\xi)$ such that
	\begin{equation}\label{eq:des_form_general}
	U(\xi){R}(\xi) = \left[
	\begin{array}{c|c}
	\mathbb{I}_{L}&\begin{array}{c}
	-M_1(\xi)
	\end{array}\\
	\hline
	\begin{array}{ccc}
	0&\dots&0
	\end{array}&D(\xi)
	\end{array}
	\right].
	\end{equation}
	where $D(\xi)$ is an upper triangular matrix and the degree of the diagonal entities of $D(\xi)$ denoted as $\DEG d_{ii}(\xi)$ for $i\in\{1,...,N-L\}$, is strictly the highest within the corresponding column of \eqref{eq:des_form_general}.
\end{thm}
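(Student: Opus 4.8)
The plan is to follow the proof of Theorem~\ref{Theorem_canonical}, of which the present statement is the case $L=N-1$, but to carry along an $(N-L)\times(N-L)$ trailing block in place of the scalar $a(\xi)$. First I would apply a standard triangularization result (e.g.\ Theorem~B.1.1 in~\cite{poldermanW97}) to obtain a unimodular $U_0(\xi)$ for which $R_0(\xi):=U_0(\xi)R(\xi)$ is upper triangular. Applying the hypothesis to $\mathcal{J}=\{1,\dots,L\}$, the first $L$ columns of $R(\xi)$ form a left unimodular matrix, and since left unimodularity of a submatrix is preserved under left multiplication by a unimodular matrix, so do the first $L$ columns of $R_0(\xi)$. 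Being columns of an upper triangular matrix, they vanish in rows $L+1,\dots,N$, hence are of the form $\left[\begin{array}{c}T(\xi)\\0\end{array}\right]$ with $T(\xi)$ an $L\times L$ upper triangular matrix; a square left unimodular matrix is unimodular, so $T(\xi)$ is unimodular and left multiplication by $\DIAG(T(\xi)^{-1},\mathbb{I}_{N-L})$ brings the first $L$ columns to $\left[\begin{array}{c}\mathbb{I}_L\\0\end{array}\right]$. The running product of unimodular factors then satisfies $U_1(\xi)R(\xi)=\left[\begin{array}{cc}\mathbb{I}_L & X(\xi)\\0 & Y(\xi)\end{array}\right]$, where $Y(\xi)$, a trailing principal submatrix of an upper triangular matrix, is upper triangular, and $\DET Y(\xi)\neq0$ because $\DET R(\xi)\neq0$.

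It then remains to normalize $Y(\xi)$ and $X(\xi)$ using only the unimodular left operations preserving this block pattern, of which there are essentially two kinds: (i) row operations among rows $L+1,\dots,N$, sending $Y(\xi)\mapsto V(\xi)Y(\xi)$ for unimodular $(N-L)\times(N-L)$ matrix $V(\xi)$ and fixing $X(\xi)$; and (ii) adding polynomial multiples of rows $L+1,\dots,N$ to rows $1,\dots,L$, sending $X(\xi)\mapsto X(\xi)+Z(\xi)Y(\xi)$ for an arbitrary $L\times(N-L)$ polynomial matrix $Z(\xi)$ and fixing $Y(\xi)$. (Row operations within rows $1,\dots,L$ must be undone to keep $\mathbb{I}_L$, and adding multiples of the top rows to the bottom rows would destroy the zero block, so nothing else is available.) Using (i), sweep the columns of the already triangular $Y(\xi)$ left to right, at column $i$ reducing by polynomial division each above-diagonal entry modulo that column's diagonal entry; such a reduction perturbs only entries in later columns and never changes a diagonal entry, so it produces an upper triangular $D(\xi)=(d_{ij}(\xi))$ with $\DEG d_{ki}(\xi)<\DEG d_{ii}(\xi)$ for all $k<i$. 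Then, using (ii) and again sweeping $i=1,\dots,N-L$ left to right, reduce the current $i$-th column of $X(\xi)$ modulo $d_{ii}(\xi)$, taking the $i$-th column of $Z(\xi)$ to be the negative quotient; since $D(\xi)$ is upper triangular, the already-chosen columns $Z_{\cdot j}(\xi)$ ($j<i$) enter column $i$ only through the fixed off-diagonal polynomials $d_{ji}(\xi)$, so this division is legitimate and gives $-M_1(\xi):=X(\xi)+Z(\xi)D(\xi)$ with $\DEG(M_1)_{ki}(\xi)<\DEG d_{ii}(\xi)$ for all $k\in\{1,\dots,L\}$. Taking $U(\xi)$ to be the total product of unimodular factors, $U(\xi)R(\xi)$ is then of the form~\eqref{eq:des_form_general}, and in each column $L+i$ every entry other than $d_{ii}(\xi)$ has strictly smaller degree, which is the claim. (When $\DEG d_{ii}(\xi)=0$ this merely forces that column to be a constant multiple of the $(L+i)$-th standard basis vector.)

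I expect the only genuinely fiddly point to be the bookkeeping in the two sweeps: verifying that reducing one column never re-inflates an already-reduced column and never alters a pivot --- both of which follow from the upper triangular structure --- and that the coupling through the off-diagonal entries $d_{ji}(\xi)$ in step~(ii) does not obstruct the column-by-column division. Everything else is standard: existence of a triangular form, invariance of left unimodularity under unimodular left multiplication, a square left unimodular matrix being unimodular, and the division algorithm for polynomials; and since the scalar case $L=N-1$ is already Theorem~\ref{Theorem_canonical}, the only new content is the block generalization sketched above.
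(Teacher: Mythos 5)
Your argument is correct and follows essentially the same route as the paper: the paper's proof of this theorem is a one-line reduction to the proof of Theorem~\ref{Theorem_canonical} with $N-1$ replaced by $L$, which is precisely the block generalization you carry out (triangularize, use left unimodularity of the first $L$ columns to normalize them to $\bigl[\begin{smallmatrix}\mathbb{I}_L\\0\end{smallmatrix}\bigr]$, then perform the Hermite-type degree reduction). The only difference is that you work out explicitly the column-by-column reduction that the paper delegates to cited references on the Kronecker-Hermite form, and your bookkeeping there is sound.
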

\begin{proof}
	The proof follows the same reasoning as the proof of Theorem~\ref{Theorem_canonical}, but replacing $N-1$ by $L$.
\end{proof}
Combining Theorem~\ref{Theorem:kernel_sec_ind} and Theorem~\ref{canon_general}, it follows that the signals $y$ in the behavior of a system $\Sigma$ with security index $\delta$ are given by the following representation
\begin{equation}\label{eq:MAdelta}
\begin{bmatrix}
\mathbb{I}_{\delta-1} & 0\\0&\mathbb{I}_{N-\delta +1}\\0 & 0 
\end{bmatrix}y = \begin{bmatrix}
M_1(\sigma)\\\mathbb{I}_{N-\delta +1}\\D(\sigma)
\end{bmatrix}\ell ,
\end{equation}
where the signal $\ell$ is an auxiliary signal that can be interpreted as a ``state signal'' that drives the system's behavior. In the representation~\eqref{eq:MAdelta} the signal $\ell$ simply coincides with the last $N-\delta +1$ components of $y$.

In fact, the above representation~\eqref{eq:MAdelta} is a special case of a more general representation~\cite{willems93,kuijaut95}, given as 
\begin{equation}\label{eq:MD}
\begin{bmatrix}
\mathbb{I}_N\\0 
\end{bmatrix}y = \begin{bmatrix}
M(\sigma)\\D(\sigma)
\end{bmatrix}\ell ,
\end{equation}
where $\ell: \mathbb{Z}_+\rightarrow\mathbb{R}^m$, $M(\xi)$ is a $N\times m$ polynomial matrix and $D(\xi)$ is a $m\times m$ polynomial matrix, for some integer $m$. We make the observability assumption that the $(N+m)\times m$ polynomial matrix $\begin{bmatrix}
M(\xi)\\D(\xi)
\end{bmatrix}$ is left unimodular, noting that this clearly holds for the above representation~\eqref{eq:MAdelta}. Note that the representation~\eqref{eq:des_form} of the previous subsection is a special case of~\eqref{eq:MD}, namely 
\[
M(\xi ) := \begin{bmatrix}
c_1(\xi)\\\vdots \\ c_{N-1}(\xi)\\1 \end{bmatrix} \mbox{ and } D(\xi) := a(\xi) .
\]

As a first step towards attack correction in the general case, we express the system's security index in terms of the polynomial matrices $M(\xi)$ and $D(\xi)$ of the general representation \eqref{eq:MD}. 
\begin{thm}\label{Theorem:MD_sec_ind}
	Consider a system $\Sigma$ whose behavior $\mathcal{B}$ is nonzero and given by \eqref{eq:MD}. Then 
	\begin{equation}
	\delta(\Sigma) = N+1 - \tilde L, \label{eq:MD_sec_ind}
	\end{equation}
	where $\tilde L$ is the smallest integer such that for any subset $\mathcal{J}\subseteq\{1,\dots,N\}$ of cardinality $\tilde L$, the $(\tilde L+m)\times m$ matrix $\begin{bmatrix}
	M_\mathcal{J}(\xi)\\D(\xi)
	\end{bmatrix}$ is left unimodular.
\end{thm}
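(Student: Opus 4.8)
The plan is to transplant the monotone counting argument behind Theorem~\ref{Theorem:kernel_sec_ind} onto the latent-variable representation~\eqref{eq:MD}. The bridge is the standard behavioral fact --- already used implicitly in the proof of Theorem~\ref{Theorem:kernel_sec_ind} --- that a polynomial matrix $P(\xi)$ is left unimodular if and only if the shift operator $P(\sigma)$ has trivial kernel, i.e. $P(\sigma)v=0$ forces $v=0$. Throughout, $M_{\mathcal{J}}(\xi)$ denotes the submatrix of $M(\xi)$ formed by the rows indexed by $\mathcal{J}$, so that $y=M(\sigma)\ell$ gives $y_{\mathcal{J}}=M_{\mathcal{J}}(\sigma)\ell$.

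The first step is to show, for every $\mathcal{J}\subseteq\{1,\dots,N\}$, that $\mathcal{B}$ contains a nonzero $y$ with $y_{\mathcal{J}}=0$ if and only if $\begin{bmatrix}M_{\mathcal{J}}(\xi)\\ D(\xi)\end{bmatrix}$ is \emph{not} left unimodular. If such a $y$ exists, pick $\ell$ with $y=M(\sigma)\ell$ and $D(\sigma)\ell=0$; then $\ell\neq0$ (else $y=0$), and $\ell$ is a nonzero element of the kernel of $\begin{bmatrix}M_{\mathcal{J}}(\sigma)\\ D(\sigma)\end{bmatrix}$, so that matrix is not left unimodular. Conversely, a nonzero $\ell$ in that kernel yields $y:=M(\sigma)\ell\in\mathcal{B}$ with $y_{\mathcal{J}}=0$, and here the observability hypothesis that $\begin{bmatrix}M(\xi)\\ D(\xi)\end{bmatrix}$ is left unimodular is precisely what forces $y\neq0$: from $M(\sigma)\ell=0=D(\sigma)\ell$ one could otherwise only conclude $\ell=0$.

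Next I would rewrite Definition~\ref{def:security_index} through this equivalence. Taking $\mathcal{J}$ to be the complement of $\supp(y)$ for a minimum-weight nonzero $y\in\mathcal{B}$ exhibits a set of cardinality $N-\delta(\Sigma)$ on which $\begin{bmatrix}M_{\mathcal{J}}(\xi)\\ D(\xi)\end{bmatrix}$ is not left unimodular; conversely, any nonzero $y\in\mathcal{B}$ vanishing on some $\mathcal{J}$ has $\|y\|\leq N-|\mathcal{J}|$, whence $\delta(\Sigma)\leq N-|\mathcal{J}|$. Therefore $\delta(\Sigma)=N-K$, where $K$ is the largest cardinality of a subset $\mathcal{J}$ for which $\begin{bmatrix}M_{\mathcal{J}}(\xi)\\ D(\xi)\end{bmatrix}$ fails to be left unimodular. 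To finish, I would identify $K=\tilde L-1$: adjoining further rows of $M(\xi)$ to $\begin{bmatrix}M_{\mathcal{J}}(\xi)\\ D(\xi)\end{bmatrix}$ preserves left unimodularity, so ``$\begin{bmatrix}M_{\mathcal{J}}(\xi)\\ D(\xi)\end{bmatrix}$ left unimodular for every $\mathcal{J}$ of cardinality $k$'' is monotone in $k$, and $\tilde L$ is well defined with $1\leq\tilde L\leq N$ (the case $k=0$ is $D(\xi)$, which is non-unimodular since $\mathcal{B}\neq\{0\}$; the case $k=N$ is the whole $\begin{bmatrix}M(\xi)\\ D(\xi)\end{bmatrix}$, left unimodular by observability). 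Minimality of $\tilde L$ then gives a subset of size $\tilde L-1$ on which the matrix fails, so $K\geq\tilde L-1$, while any subset of size $\geq\tilde L$ contains one of size $\tilde L$ on which it succeeds, so $K\leq\tilde L-1$. Hence $\delta(\Sigma)=N-(\tilde L-1)=N+1-\tilde L$.

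The main thing to get right, and the only genuinely non-routine ingredient, is the left-unimodularity/trivial-kernel equivalence together with the bookkeeping in the first step that the observability hypothesis is exactly what carries non-vanishing back and forth between $\ell$ and $y$; everything after that is the same monotone subset counting as in Theorem~\ref{Theorem:kernel_sec_ind}. An alternative would be to eliminate $\ell$ from~\eqref{eq:MD} to obtain a square full-rank kernel representation and invoke Theorem~\ref{Theorem:kernel_sec_ind} verbatim, but the correspondence between column subsets of the eliminated matrix and row subsets of $M(\xi)$ is opaque, so I would prefer the direct argument above.
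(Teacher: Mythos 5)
Your proof is correct and takes essentially the same route as the paper: both directions rest on the correspondence between failure of left unimodularity of $\begin{bmatrix}M_{\mathcal{J}}(\xi)\\ D(\xi)\end{bmatrix}$ and the existence of a nonzero $y\in\mathcal{B}$ vanishing on $\mathcal{J}$, applied once to a failing subset of cardinality $\tilde L-1$ to produce a low-weight trajectory and once to the complement of the support of a minimum-weight trajectory. The differences are only presentational: you spell out the monotonicity under adjoining rows of $M(\xi)$ and the use of the observability assumption to ensure $y\neq 0$, details the paper leaves implicit.
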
 
\begin{proof}
	Clearly, there exists a subset $\mathcal{J}\subseteq\{1,\dots,N\}$ of cardinality $\tilde L-1$ such that $\begin{bmatrix}
	M_\mathcal{J}(\xi)\\D(\xi)
	\end{bmatrix}$ is not left unimodular. Thus there exists a nonzero signal $\ell^\star$ that satisfies $\begin{bmatrix}
	M_\mathcal{J}(\sigma)\\D(\sigma)
	\end{bmatrix}\ell^\star = 0$. Now consider the signal $y$ defined as $y:=M(\sigma) \ell^\star$. Clearly $\|y\| \leq N-(\tilde L-1)=N-\tilde L +1$. This implies that
	\begin{equation}
	\delta(\Sigma) \leq N- \tilde L+1. \label{eq:MD_sec_ind_1}
	\end{equation}
	To prove that also $\delta(\Sigma) \geq N- \tilde L+1$, let $y^\star$ be a signal in $\mathcal{B}$ of weight $\delta(\Sigma)$. Thus there exists a nonzero signal $\ell^\star$ such that
	\begin{equation}
	\begin{bmatrix}
	\mathbb{I}_N\\0 
	\end{bmatrix}y^\star = \begin{bmatrix}
	M(\sigma)\\D(\sigma)
	\end{bmatrix}\ell ^\star.
	\end{equation}
	Define $\bar{\mathcal{J}} \subset \{1,2, \ldots , N\}$ as the set of cardinality $\delta(\Sigma)$ for which $y^\star_{{\mathcal{J}}} = 0$. Then $\begin{bmatrix}
	M_{{\mathcal{J}}}(\sigma)\\D(\sigma)
	\end{bmatrix}\ell ^\star = 0$ and because $\ell ^\star \neq 0$, it follows that $\begin{bmatrix}
	M_{{\mathcal{J}}}(\sigma)\\D(\sigma)
	\end{bmatrix}$ is not left unimodular. This implies that $\tilde L\geq N -  \delta(\Sigma) +1$. Because of~(\ref{eq:MD_sec_ind_1}), it follows that equation~(\ref{eq:MD_sec_ind}) holds. 
\end{proof}

Before defining our method of attack correction, we need several preliminary computations. Let $\mathcal{J}$ be a subset of $\{1,\dots,N\}$ of cardinality $N+1 -\delta$. Suppose that the matrix $\begin{bmatrix}
M_{\mathcal{J}}(\xi)\\D(\xi)
\end{bmatrix} $ is left unimodular. Define 
polynomial matrices $P^{\mathcal{J}}(\xi)$ and $Q^{\mathcal{J}}(\xi)$ such that 
\begin{equation}\label{eq:coprime_general}
\begin{bmatrix}
P^{\mathcal{J}}(\xi)&Q^{\mathcal{J}}(\xi)
\end{bmatrix}\begin{bmatrix}
M_{\mathcal{J}}(\xi)\\D(\xi)
\end{bmatrix} = \mathbb{I}_{N+1 -\delta} .
\end{equation}
\begin{algorithm}
	\caption{Attack correction for general system given by \eqref{eq:MD}}\label{alg:att_corr_general}
	\begin{algorithmic}[1]
		\Procedure{}{$M(\xi),D(\xi),\delta,r,\hat y$} \newline\Comment{Given $M(\xi),D(\xi)$, $\delta$ and $r$, compute $\hat y$}.
		\State Calculate 
		\begin{equation}\label{eq_majority_general}
		\hat \ell = \text{Maj}\{P^{\mathcal{J}}(\sigma )r_{\mathcal{J}}  \} ,
		\end{equation}
		where the majority vote is taken over all subsets ${\mathcal{J}}$ of cardinality $N+1 -\delta$ and $P^\mathcal{J}(\xi)$ is defined as in \eqref{eq:coprime_general}.
		\State $\hat y = M(\sigma) \hat \ell$.
		\State\textbf{return} $\hat y$.
		\EndProcedure
	\end{algorithmic}
\end{algorithm}

In Algorithm \ref{alg:att_corr_general} we can interpret each computation $P^{\mathcal{J}}(\sigma )r_{\mathcal{J}} $ as an exact observer (see proof of Theorem \ref{Theorem:gen_case_correction}) that produces an estimated signal $\hat \ell$ from the received signal $r$. The algorithm directs us to take a majority vote of all such observer outcomes and to declare this signal to be the correct signal $\ell$. As we will see below, this algorithm and the next theorem are the main results of this paper. We note that the next theorem requires a nontrivial proof, reminiscent of majority vote proofs in the classical coding literature, such as~\cite{reedS60}.
\begin{thm}
	\label{Theorem:gen_case_correction}
	Consider a system $\Sigma$ given by \eqref{eq:system}; denote its security index by $\delta$. Let the received signal $r$ be input to Algorithm~\ref{alg:att_corr_general}. Assume that $r=y+\eta$ with $y \in\mathcal{B}$ and $\|{\eta}\|<\delta/2$.
	Then the output $\hat y$ of Algorithm~\ref{alg:att_corr_general} equals $y$.
\end{thm}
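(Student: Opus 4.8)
\noindent\emph{Proof idea.} I would show that Algorithm~\ref{alg:att_corr_general} recovers the correct driving signal $\ell$; once $\hat\ell=\ell$ is established, $\hat y=M(\sigma)\hat\ell=M(\sigma)\ell=y$ is immediate. Since $y\in\mathcal{B}$, representation \eqref{eq:MD} provides a signal $\ell$ with $y=M(\sigma)\ell$ and $D(\sigma)\ell=0$, and this $\ell$ is unique because $\begin{bmatrix}M(\xi)\\D(\xi)\end{bmatrix}$ is left unimodular (the observability assumption). Set $k:=N+1-\delta$.

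First I would set up the ``observers''. By Theorem~\ref{Theorem:MD_sec_ind}, for every subset $\mathcal{J}\subseteq\{1,\dots,N\}$ of cardinality $k$ the matrix $\begin{bmatrix}M_{\mathcal{J}}(\xi)\\D(\xi)\end{bmatrix}$ is left unimodular, so the Bezout pair $P^{\mathcal{J}},Q^{\mathcal{J}}$ of \eqref{eq:coprime_general} exists. Using $P^{\mathcal{J}}M_{\mathcal{J}}+Q^{\mathcal{J}}D=\mathbb{I}$ together with $D(\sigma)\ell=0$, a one-line computation gives, for \emph{every} such $\mathcal{J}$,
\[
P^{\mathcal{J}}(\sigma)r_{\mathcal{J}}=P^{\mathcal{J}}(\sigma)M_{\mathcal{J}}(\sigma)\ell+P^{\mathcal{J}}(\sigma)\eta_{\mathcal{J}}=\ell+P^{\mathcal{J}}(\sigma)\eta_{\mathcal{J}}.
\]
Hence $P^{\mathcal{J}}(\sigma)r_{\mathcal{J}}=\ell$ whenever $\mathcal{J}\cap\supp(\eta)=\emptyset$; I will call such a $\mathcal{J}$ \emph{clean}. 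Because $\|\eta\|<\delta/2$ there are exactly $\binom{N-\|\eta\|}{k}$ clean subsets, and each casts a vote for $\ell$ in \eqref{eq_majority_general}.

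The remaining task — and the crux — is to show that no signal $w\neq\ell$ can match this vote count. From the identity above, the subsets voting for $w=\ell+\Delta$, $\Delta\neq0$, are exactly those with $P^{\mathcal{J}}(\sigma)\eta_{\mathcal{J}}=\Delta$, and each of them necessarily meets $\supp(\eta)$. A plain comparison of clean versus non-clean subsets will not suffice once $\delta\ll N$, since the cardinality-$k$ index sets overlap heavily and there is no partition of $\{1,\dots,N\}$ at hand as there is in Reed--Muller decoding~\cite{reedS60}. Instead I would argue in the spirit of classical majority-logic proofs: the subsets voting for a common wrong $w$ are ``confused identically'' by $\eta$, and from $P^{\mathcal{J}}(\sigma)\eta_{\mathcal{J}}=\Delta$ one sees that on each of them $r_{\mathcal{J}}$ coincides with the competing behaviour signal $y':=M(\sigma)(\ell+\Delta)$; the discrepancy $g:=y'-y=M(\sigma)\Delta$ then reconciles $r=y+\eta=y'+(\eta-g)$. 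The plan is to turn ``$w$ gets at least $\binom{N-\|\eta\|}{k}$ votes'' into the existence of a nonzero element of $\mathcal{B}$ of weight strictly less than $\delta$, contradicting Definition~\ref{def:security_index} (equivalently Theorem~\ref{Theorem:kernel_sec_ind}). With that in hand, $\ell$ is the strict plurality winner, so $\hat\ell=\ell$ and $\hat y=y$.

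The hard part is precisely this last step: converting ``many cardinality-$k$ subsets vote for the same wrong $w$'' into a genuine contradiction with $\delta(\Sigma)=\delta$ and $\|\eta\|<\delta/2$. The obstruction is that the Bezout matrices $P^{\mathcal{J}}$ are far from unique and the index sets $\mathcal{J}$ are not disjoint, so the number of non-clean subsets alone can exceed the number of clean ones. What must be exploited is the left unimodularity of $\begin{bmatrix}M_{\mathcal{J}}(\xi)\\D(\xi)\end{bmatrix}$ — equivalently, that no nonzero behaviour signal is supported on as few as $\delta-1$ sensors — to force the wrong votes apart; the cleanest route to that is likely the two-behaviour-signals argument sketched above.
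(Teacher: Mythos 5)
Your setup matches the first half of the paper's proof: the observers are well defined via Theorem~\ref{Theorem:MD_sec_ind}, every clean subset returns the true $\ell$, and there are $\binom{N-t}{N+1-\delta}$ of them (writing $t=\|\eta\|$). But the argument stops exactly where you say the hard part begins, so there is a genuine gap, and the route you sketch for closing it does not go through as stated. First, $P^{\mathcal{J}}(\sigma)\eta_{\mathcal{J}}=\Delta$ does not imply that $r_{\mathcal{J}}$ coincides with $y'_{\mathcal{J}}$ for $y'=M(\sigma)(\ell+\Delta)$: the matrix $P^{\mathcal{J}}$ is only a left inverse of $\begin{bmatrix} M_{\mathcal{J}}(\xi)\\D(\xi)\end{bmatrix}$, hence many-to-one, and the vote value alone does not determine $\eta_{\mathcal{J}}$. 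Second, $M(\sigma)\Delta$ need not lie in $\mathcal{B}$ at all unless $D(\sigma)\Delta=0$, so the planned ``nonzero behaviour element of weight less than $\delta$'' is not yet in reach.

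The missing idea---and the paper's actual argument---is a \emph{per-candidate} counting bound, not the aggregate clean-versus-non-clean comparison you correctly reject. Fix a wrong candidate $\ell^\star\neq\ell$ and let $\mathcal{J}^\star$ be any subset voting for it. Since $\mathcal{J}^\star$ must meet $\supp(\eta)$, it contains at most $N-\delta$ unattacked sensors, which therefore fit inside some set $\mathcal{I}$ of $N-\delta$ unattacked sensors; put $\mathcal{I}^\star:=\mathcal{I}\cup\supp(\eta)$, of cardinality $N-\delta+t$. The paper then shows that \emph{every} subset voting for this same $\ell^\star$ is contained in $\mathcal{I}^\star$: a subset $\tilde{\mathcal{J}}$ voting for $\ell^\star$ but not contained in $\mathcal{I}^\star$ would contribute an unattacked sensor outside $\mathcal{I}$, yielding more than $N-\delta$ unattacked sensors consistent with $\ell^\star$; but any $N+1-\delta$ unattacked sensors return the correct $\ell$, forcing $\ell^\star=\ell$, a contradiction. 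Hence each wrong candidate receives at most $\binom{N-\delta+t}{N+1-\delta}$ votes, and $2t<\delta$ gives $N-t>N-\delta+t$, so $\binom{N-t}{N+1-\delta}>\binom{N-\delta+t}{N+1-\delta}$ and $\ell$ wins the plurality. Your instinct that subsets voting for a common wrong $w$ are ``confused identically'' is the right one, but it must be cashed out as this containment-in-$\mathcal{I}^\star$ step rather than as a weight bound on a behaviour signal.
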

\begin{proof}
	Let's denote $\| \eta \| =$ the number of attacked sensors by $t$, thus $2t < \delta$. Let ${\mathcal{J}} $ be a subset of cardinality $N+1 -\delta $ from the set of unattacked sensors. Then
	\begin{equation}\label{M_J}
	\begin{bmatrix}
	\mathbb{I}_{N+1-\delta}\\0 
	\end{bmatrix}r_{\mathcal{J}} =\begin{bmatrix}
	\mathbb{I}_{N+1-\delta}\\0 
	\end{bmatrix}y_{\mathcal{J}} = \begin{bmatrix}
	M_{\mathcal{J}}(\sigma) \\D(\sigma)
	\end{bmatrix}\ell,
	\end{equation}
	where $\ell$ is the correct signal. We first show that \eqref{eq_majority_general} is well defined. Because of Theorem \ref{Theorem:MD_sec_ind}, the matrix $\begin{bmatrix}
	M_\mathcal{J}(\xi)\\D(\xi)
	\end{bmatrix}$ is left unimodular, so that matrices $P_\mathcal{J}(\xi)$ and $Q_\mathcal{J}(\xi)$ can be found such that \eqref{eq:coprime_general} holds. Using \eqref{eq:coprime_general}, it follows from (\ref{M_J}) that $\hat \ell = P^{\mathcal{J}}(\sigma )r_{\mathcal{J}}$ equals the correct signal $\ell$. There are ${N-t \choose N+1 -\delta }$ ways to choose a subset ${\mathcal{J}} $ of cardinality $N+1 -\delta $ from the set of unattacked sensors. Each of these choices leads to $\hat \ell$ as the correct signal $\ell$. 
	\newline
	Next, let's consider a subset ${\mathcal{J}^\star} $ of cardinality $N+1 -\delta $ that leads to a signal $\hat \ell$ that is incorrect, say $\ell^\star \neq \ell$. Clearly ${\mathcal{J}^\star} $ must involve one or more attacked sensors. Since ${\mathcal{J}^\star} $ has only $N+1 -\delta $ elements, it follows that all unattacked sensors that are involved in ${\mathcal{J}^\star} $ fit into a certain subset, say ${\mathcal{I}} $ of $N-\delta$ unattacked sensors. Now define the set ${\mathcal{I}}^\star $ as the union of ${\mathcal{I}} $ and all $t$ attacked sensors. Then a set ${\mathcal{J}} $ of cardinality $N+1 -\delta $ that is a subset of ${\mathcal{I}}^\star $ may lead to the same incorrect $\ell^\star$. Let's now consider a set ${\mathcal{\tilde J}} $ of cardinality $N+1 -\delta $ that 
	leads to the incorrect signal $\ell^\star$ but that is not a subset of ${\mathcal{I}}^\star $. We note that ${\mathcal{\tilde J}} $ must then involve an unattacked sensor outside of ${\mathcal{I}} $. Thus there exist more than $N-\delta$ unattacked sensors that lead to the same signal $\ell^\star$. Since any set of $N-\delta +1$ unattacked sensors lead to the correct signal $\ell$, it follows that 
	$\ell^\star$ must be the correct signal $\ell$, which is a contradiction. We conclude that ${\mathcal{\tilde J}} $ does not lead to $\ell^\star$.
	\newline
	From this reasoning, it follows that there are at most ${N-\delta + t \choose N+1 -\delta }$ choices of ${\mathcal{J}} $ that lead to the same incorrect $\ell^\star$. Now recall from the first part of this proof that there are at least ${N-t \choose N+1 -\delta } $ choices of ${\mathcal{J}} $ that lead to the correct signal $\ell$. Since $2t < \delta$ we have
	\[
	{N-t \choose N+1 -\delta } > {N-\delta + t \choose N+1 -\delta } ,
	\]
	so that the majority vote in Algorithm~\ref{alg:att_corr_general} leads to the correct $\ell$. As a result, $\hat y=M(\sigma ) \hat \ell = M(\sigma ) \ell $ equals the correct signal $y$ and this proves the theorem. 
\end{proof}

\subsection{Comparison with the literature}

The attack correction method for the general case of Subsection \ref{Correction_GC} is useful because the representation \eqref{eq:MD} is very general. For example, a special case of \eqref{eq:MD} is
\[
M(\xi ) := \begin{bmatrix}
c_1(\xi)\\\vdots \\ c_{N}(\xi) \end{bmatrix} \mbox{ and } D(\xi) := a(\xi) ,
\]
and we see that Lemma IV-8 of~\cite{Chong2016cdc} then follows immediately from the above theorem. 
\newline
Another special case is a state space representation---then $M(\xi)$ equals a constant $N\times n$ ``output matrix" $C$ and $D(\xi) = \xi \mathbb{I}_n - A$, where $A$ is a constant ``state transition matrix". As this is the type of representation that is considered in e.g.~\cite{ChongWakaikiHespanhaACC15,shoukryT2016,shoukryCWNVSHT2016,fawziTD14}, we are now in a position to compare our results with the literature in this area. In contrast to our representation-free definition of ``security index", much of the literature implicitly defines a similar notion in terms of the matrices $A$ and $C$ of the state space representation. For example~\cite{ChongWakaikiHespanhaACC15} defines {\em $M$-attack observability} of a state space representation $(A,C)$ in terms of the matrices $A$ and $C$. In terms of our notion of security index $\delta(\Sigma)$, any observable state space representation $(A,C)$ is $\lfloor \delta(\Sigma)/2 \rfloor$-attack observable. Alternatively, in the terminology of~\cite{shoukryT2016,shoukryCWNVSHT2016}, this is phrased as $(\delta(\Sigma) -1)$-{\em sparse attack observability}.

Comparing our Algorithm~\ref{alg:att_corr_general} to the noise-free attack correction method of ~\cite[Section III-A]{ChongWakaikiHespanhaACC15} and~\cite[Section III-A]{shoukryCWNVSHT2016}, we see that our Algorithm~\ref{alg:att_corr_general} is simpler with lower computational complexity, as it requires fewer observers and fewer observer comparisons.

\section{Examples}\label{sec:example}
We illustrate the workings of the proposed attack correction method with two examples. In the first example the $A$ matrix is in companion form so as to be able to illustrate our theory with very simple observers. The second example has an $A$ matrix that is not in companion form. 
\subsection{Example 1}
Consider a discrete-time LTI system given by a state space representation $(A,C)$ with
\begin{equation*}
A = \begin{bmatrix}
0&1&0\\
0&0&1\\
\frac{1}{2}&-\frac{3}{2}&\frac{3}{2}
\end{bmatrix},\quad C = \mathbb{I}_3 ,\quad y(0) = \begin{bmatrix}   1 \\ 1 \\ 1 \end{bmatrix}.
\end{equation*}
Note that this system is marginally stable with three distinct eigenvalues: $\lambda_1 = 1\angle60^\circ$, $\lambda_2 = 1\angle-60^\circ$ and $\lambda_3 = \frac{1}{2}$. Its Kronecker-Hermite canonical kernel representation can be computed as
\begin{equation*}
R(\xi) = \begin{bmatrix}
1&0&-6\xi^2+7\xi-6\\
0&1&-2\xi^2+3\xi-3\\
0&0&\xi^3-\frac{3}{2}\xi^2+\frac{3}{2}\xi-\frac{1}{2}
\end{bmatrix}
\end{equation*}
Using Corollary \ref{cor:ker_max}, we find that the system is maximally secure, i.e., $\delta(\Sigma) = 3$. According to Theorems \ref{Theorem:detect} and \ref{Theorem:Correctability} the system can detect up to 2 sensor attacks and it can correct any single sensor attack. The value of observer $p_j(\xi)$s and corresponding $q_j(\xi)$s that satisfy equation \eqref{eq:coprime} can be computed as follows:
\begin{align*}
p_1(\xi) &=\xi^2, \;\;\; q_1(\xi)= -6\xi-2\\
p_2(\xi) &=\xi, \;\;\;  q_2(\xi)=-2
\end{align*}
Algorithm \ref{alg:att_corr} yields the following attack correction outcomes:
\begin{equation}
\label{eq:exam_correction}
\begin{split}
\hat{y}_3 &= \text{Maj}\{\sigma^2r_1,\sigma r_2,r_3\}\\
\hat{y}_1 &= (6\sigma^2-7\sigma+6)\hat{y}_3\\
\hat{y}_2 &= (2\sigma^2-3\sigma+3)\hat{y}_3
\end{split}
\end{equation}
Figure \ref{fig:sys_mod} shows the block diagram of the attack correction method. 
\begin{figure}[h]
	\centering
	\includegraphics[width=0.48\textwidth]{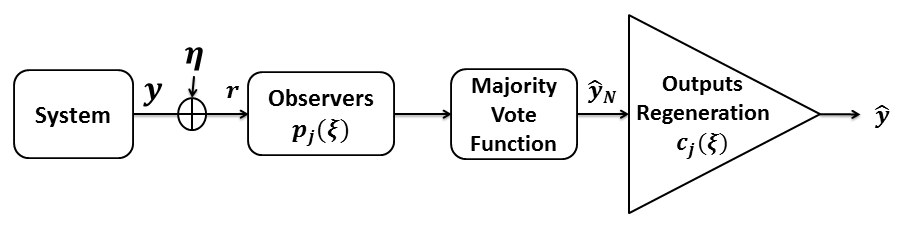}\\
	\caption{Block diagram for attack correction}
	\label{fig:sys_mod}
\end{figure}

Next, we conduct a simulation to illustrate the workings of the attack correction method for this example. First, we specify a single sensor attack signal on sensor 3, as illustrated in Figure \ref{fig:attack_signal}. Here the attack signal is generated as an i.i.d. random sequence using a uniform distribution on the interval $(-1,1)$. 

\begin{figure}[h]
	\centering
	\includegraphics[width=0.48\textwidth,height=0.16\textwidth]{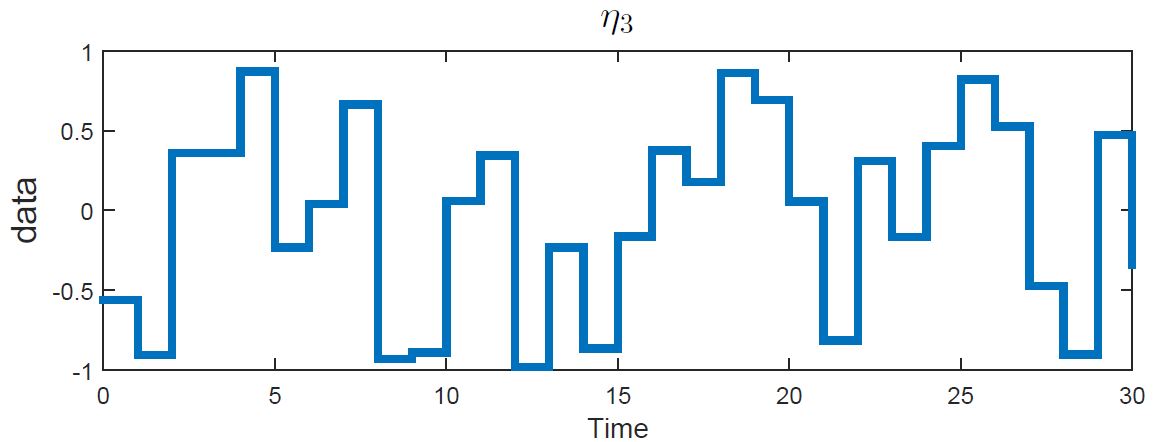}\\
	\caption{Attack signal on sensor 3}
	\label{fig:attack_signal}
\end{figure}

Figure \ref{fig:Theo_Obs} shows the observer outputs. It can be seen that apart from the first 2 time instants, observer outputs 1 and 2 are identical and thus selected by the majority vote function for further processing. The inequality for the first 2 time instants is caused by the latency in the observers, more specifically the latency equals $\max_j \DEG p_j(\xi) = 2$.

\begin{figure}[h]
	\centering
	\includegraphics[width=0.5\textwidth]{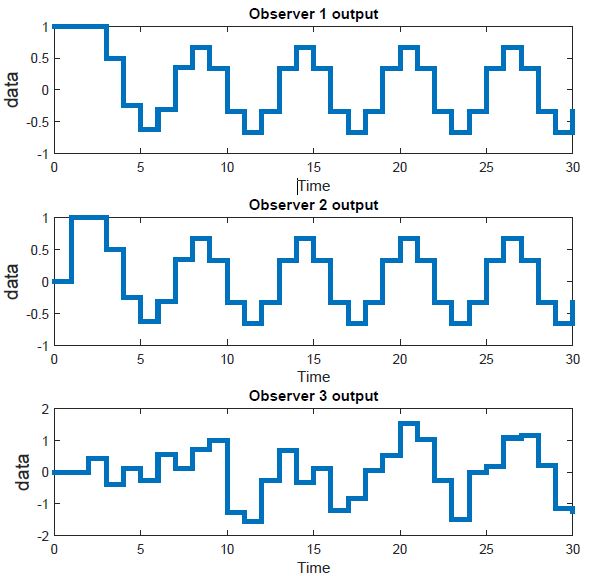}\\
	\caption{Observer $p_j(\xi)$ outputs}
	\label{fig:Theo_Obs}
\end{figure}

Figure \ref{fig:Theo_Output} shows the signals for $\hat{y}_i-y_i$. It can be seen that apart from the first 4 time instants, the difference between the corrected output signals and the attack free system output is negligible, meaning that Algorithm \ref{alg:att_corr} indeed reconstructs the attacked 3rd output signal correctly in accordance with Theorem~\ref{thm_maxSecureMain}. The inequality at the first 4 time instants is caused by the latency in observers $p_j(\xi)$s and output regeneration shift operators $c_j(\xi)$, more specifically the latency equals $\max_i \deg p_i(\xi)+\max_j \deg c_j(\xi) = 4$.

\begin{figure}[h]
	\centering
	\includegraphics[width=0.5\textwidth]{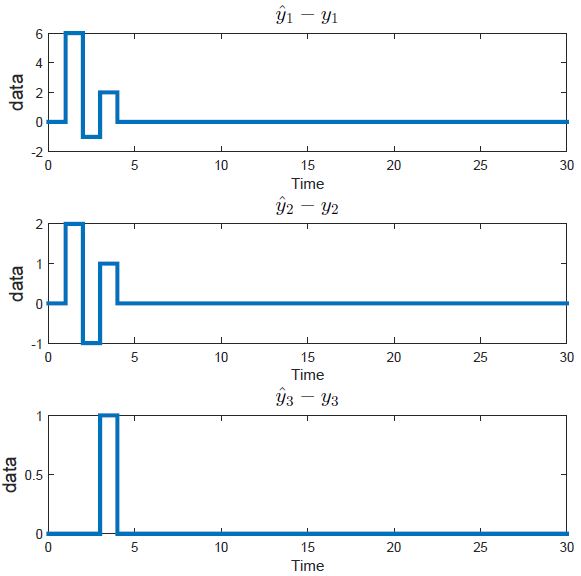}\\
	\caption{Comparison for different outputs}
	\label{fig:Theo_Output}
\end{figure}
\subsection{Example 2}
Consider a 6-output stable voltage-source converter \cite{Huerta2012} that is modelled as a discrete-time LTI system.
After discretization by means of the zero-order hold method with sampling period $T_s = 200\mu s$, the resulting $R(\xi)$ matrix is given by
\begin{equation}\label{eq:example_R_matrix}
R(\xi) = \xi\mathbb{I}_6-\text{exp}(\tilde AT_s),
\end{equation}
where $\tilde A$ is given by
\begin{equation}
\label{eq:system_matrix}
\tilde A = \begin{bmatrix}
-\frac{R_1}{L_1}&\omega_1&0&0&-\frac{1}{L_1}&0\\
-\omega_1&-\frac{R_1}{L_1}&0&0&0&-\frac{1}{L_1}\\
0&0&-\frac{R_2}{L_2}&\omega_1&\frac{1}{L_2}&0\\
0&0&-\omega_1&-\frac{R_2}{L_2}&0&\frac{1}{L_2}\\
\frac{1}{C_0}&0&-\frac{1}{C_0}&0&0&\omega_1\\
0&\frac{1}{C_0}&0&-\frac{1}{C_0}&-\omega_1&0
\end{bmatrix},
\end{equation}
with parameters shown in Table \ref{tab:parameters}.
\begin{table}[H]
	\centering
	\caption{System main parameters}
	\label{tab:parameters}
	\begin{tabular}{|l|l|l|l|l|l}
		\hline
		$L_1$ & 4.3mH  & $R_2$    & 67.3m$\Omega$\\ \hline
		$R_1$ & 83.1m$\Omega$ & $C_0$    & 18$\mu$F   \\ \hline
		$L_2$ & 2.4mH  & $\omega_1$ & 100$\pi$Rad/s  \\ \hline
	\end{tabular}
\end{table}

The Kronecker-Hermite canonical form of $R(\xi)$ can be computed as follows, where we denote $a\times 10^b$ as $aeb$.
\begin{equation*}
\begin{bmatrix}
1&0&0&0&0&\begin{split}
7.4e2\xi^5-1.8e3\xi^4+2.9e3\xi^3\\-2.9e3\xi^2+1.8e3\xi-7.3e2
\end{split}\\
\\
0&1&0&0&0&\begin{split}
94\xi^5-2.7e2\xi^4+4.3e2\xi^3\\-4.8e2\xi^2+2.9e2\xi-1.4e2
\end{split}\\
\\
0&0&1&0&0&\begin{split}
7.4e2\xi^5-1.8e3\xi^4+2.9e3\xi^3\\-2.9e3\xi^2+1.8e3\xi-7.3e2
\end{split}\\
\\
0&0&0&1&0&\begin{split}
94\xi^5-2.7e2\xi^4+4.3e2\xi^3\\-4.8e2\xi^2+2.9e3\xi-1.4e2
\end{split}\\
\\
0&0&0&0&1&\begin{split}
4.7\xi^5-3.2\xi^4+3.3\xi^3\\-2.4\xi^2+1.2\xi-3.3
\end{split}\\
\\
0&0&0&0&0&\begin{split}
3.6e4\xi^6-1.3e5\xi^5+2.3e5\xi^4-2.9\\e5\xi^3+2.3e5\xi^2-1.2e5\xi+3.6e4
\end{split}\\
\end{bmatrix}
\end{equation*}
Using Corollary \ref{cor:ker_max}, we find that the system is maximally secure, i.e., $\delta(\Sigma) = 6$. Furthermore, we compute the observers $p_j(\xi)$s as
\begin{align*}
p_1(\xi) &= 1.3e2\xi^5-2.7e2\xi^4+2.2e2\xi^3-69\xi^2-88\xi+78\\
p_2(\xi) &= 4.1e-2\xi^5-19\xi^4+44\xi^3-65\xi^2+75\xi-35\\
p_3(\xi) &=-72\xi^5+1.5e2\xi^4-1.2e2\xi^3+39\xi^2+49\xi-44\\
p_4(\xi) &=-2.3e{-3}\xi^5+11\xi^4-24\xi^3+36\xi^2-42\xi+19\\
p_5(\xi) &=-4.7\xi^5+3.2\xi^4-3.3\xi^3+2.4\xi^2-1.2\xi+3.3 .
\end{align*}
According to Theorems \ref{Theorem:detect} and \ref{Theorem:Correctability}, any attacks on maximally 5 of the outputs are guaranteed to be detected and attacks on maximally 2 of its outputs are guaranteed to be corrected. Note that for this example it is easy to construct attack scenarios that attack 3 or 4 of the outputs but can still be corrected by our method.

\section{Conclusions and future work}\label{sec:con}

In this paper, we proposed attack detection and correction methods for zero-input discrete LTI systems in the noise-free case. The purpose of this paper is to provide a proof of concept around the application of ideas from error control coding theory to handle attacks on LTI systems. We have shown how the interaction between the inner system dynamics and the sensor placements determines the vulnerability of the system against sensor attacks. We quantified this via the notion of a system security index. We presented detection and correction methods that exploit the known dynamics of the system. In these methods the security index plays a prominent role. 

Future research directions are: build on the discussed results to develop attack detection and correction methods for different system models such as systems with disturbance and noise, multi-input multi-output systems, nonlinear systems and  systems over finite alphabets.

\bibliographystyle{plain}
\bibliography{code}

\end{document}